\documentclass{article}
\usepackage[utf8]{inputenc}

% % 

%% HERE arXivfz

\usepackage[framemethod=TikZ]{mdframed}
\usepackage{xcolor}
\usepackage{tikz}
\usetikzlibrary{shapes,decorations}
\usetikzlibrary {positioning}
\RequirePackage[OT1]{fontenc}
\RequirePackage{amsthm}
\RequirePackage[cmex10]{amsmath}
\usepackage{natbib}
%\usepackage[maxcitenames=3]{biblatex}
%\RequirePackage[colorlinks,citecolor=blue,urlcolor=blue]{hyperref}
%\RequirePackage{hypernat}

%\startlocaldefs
\numberwithin{equation}{section}
\theoremstyle{plain}
\newtheorem{assumption}{Assumption}[section]
\newtheorem{lemma}{Lemma}[section]
\newtheorem{remark}{Remark}[section]

% Alternative Assumption!

% New environment

\usepackage{tabularx}
\usepackage{adjustbox}
\usepackage{booktabs}

\usepackage{array}
\newcolumntype{$}{>{\global\let\currentrowstyle\relax}}
\newcolumntype{^}{>{\currentrowstyle}}

\newcommand{\ci}{\perp\!\!\!\perp}

%\endlocaldefs
\usepackage{paralist}
\usepackage[section]{placeins}
\usepackage{amsfonts}
\usepackage{booktabs}
\usepackage{siunitx}
\usepackage{tabularx}
\usepackage{mathptmx}
\usepackage[11pt]{moresize}
\usepackage{algorithm}
\usepackage{amsbsy}
\usepackage{amsmath}
\usepackage{amsmath,amssymb,amsthm,bm}
\usepackage{amsthm}
\usepackage{cmap}
\usepackage{cases}
\usepackage{xr}
\usepackage{algorithm}

\usepackage[english]{babel}
\usepackage{enumitem}
\usepackage{epsfig}
\usepackage{eucal}
\usepackage[font=small]{caption}
\usepackage[T1]{fontenc}
\usepackage{graphicx}
\usepackage{lscape}
\usepackage[margin=1.35in]{geometry}
\usepackage{mathptmx}
\usepackage{mathtools}
\usepackage{natbib}
\usepackage{nccmath}
\usepackage[noend,]{algorithmic}
\usepackage{setspace}
\usepackage{sgame}
\usepackage{subcaption}
\usepackage{tikz}
\usepackage{url}
\usepackage{pdflscape}
\usepackage{afterpage}
\usepackage{quoting, lipsum}
%\usepackage{hyperref}
%Numbered environment
%\newcounter{example}[section]
%\newenvironment{example}[1][]{\eqrefstepcounter{example}\par\medskip
  % \noindent \textbf{Example~\theexample. #1} \rmfamily}{\medskip}

%Numbered environment defined with Newtheorem

\newcommand{\E}{\mathbb{E}}

\newcount\Comments
\Comments=1

\theoremstyle{plain}

\newtheorem{theorem}{Theorem}[]

\newtheorem{corollary}[theorem]{Corollary}

\newtheorem{proposition}[theorem]{Proposition}
\newcounter{example}[section]
\newenvironment{example}[1][]{\refstepcounter{example}\par\medskip
   \noindent \textbf{Example~\theexample. #1} \rmfamily}{\medskip}

\newtheorem{definition}[]{Definition}

\theoremstyle{remark}
\renewcommand{\theremark}{\arabic{section}.\arabic{remark}}
\renewcommand{\theexample}{\arabic{section}.\arabic{example}}

\renewcommand{\thecorollary}{\arabic{section}.\arabic{corollary}}

\newcommand{\ba}{\begin{array}}
\newcommand{\ea}{\end{array}}

\makeatletter
\DeclareRobustCommand{\varamalg}{%
  \mathbin{\mathpalette\var@malg\perp}%
}

\newcommand\var@malg[2]{%
  \rlap{$\m@th#1#2$}\mkern6mu{#1#2}%
}
\makeatother

\newcommand{\bs}{\begin{align}\begin{split}\nonumber}
\newcommand{\bsnumber}{\begin{align}\begin{split}}
\newcommand{\es}{\end{split}\end{align}}
%\newcommand{\singlespace}{\linespread{1}\small\normalsize}
%\newcommand{\doublespace}{\linespread{1.2}\small\normalsize}

      % Stretch lines pacing

\newcommand{\Zeta}{\mathrm{Z}}

\newcolumntype{Y}{>{\centering\arraybackslash}X}
\usepackage[T1]{fontenc}
\usepackage{etoolbox}

\makeatletter
\patchcmd{\maketitle}
 {\def\@makefnmark}
 {\def\@makefnmark{}\def\useless@macro}
 {}{}
\makeatother
\usepackage{amsmath}
\makeatletter
\renewcommand*\env@matrix[1][*\c@MaxMatrixCols c]{%
  \hskip -\arraycolsep
  \let\@ifnextchar\new@ifnextchar
  \array{#1}}
\makeatother

\usepackage{pdflscape}
\usepackage{afterpage}
\usepackage{capt-of}% or use the larger `caption` package

\newcommand{\EN}{{\mathbb{E}_{N}}}
\newcommand{\GN}{{\mathbb{G}_{N}}}

\newenvironment{continuance}[1]
  {\par\bigskip\noindent\textbf{Example #1 (continued)}\itshape}
  {\par}

\begin{document}
\linespread{1.5}

\title {Debiased Machine Learning of Aggregated Intersection Bounds and Other Causal Parameters}

\author{
	Vira Semenova\thanks{Email: vsemenova@berkeley.edu. First version: March 2023, arXiv ID  2303.00982, Vira Semenova ``Adaptive Estimation of Intersection Bounds: a Classification Approach''. For helpful discussions, the author is grateful to Isaiah Andrews, David Bruns-Smith, Yahu Cong, Denis Chetverikov, Federico Echenique, Bryan Graham, Michael Jansson, Hiroaki Kaido,  Désiré Kédagni, 
	 Toru Kitagawa, Patrick Kline, Soonwoo Kwon, Ying-Ying Lee, Lihua Lei, Demian Pouzo, Jim Powell,  Ashesh Rambachan, Jonathan Roth, Chris Shannon, Rahul Singh, Sophie Sun, Davide Viviano, Christopher Walters,	Mingduo Zhang and numerous seminar participants.  }}  

\maketitle

\begin{abstract}
This paper proposes a novel framework of aggregated intersection of regression functions, where the target parameter is obtained by averaging the minimum (or maximum)   of a collection of regression functions over the covariate space. Examples of such quantities include the lower and upper bounds on distributional effects (Fr\'echet-Hoeffding, Makarov) as well as the optimal welfare in statistical treatment choice problem \citep{QianMurphy}. The proposed estimator -- the envelope score estimator -- is shown to have an oracle property, where the oracle knows the identity of the minimizer  for each covariate value. I apply this result to the bounds in Roy model and Horowitz-Manski-Lee bounds with discrete outcome. The proposed approach performs well empirically on  the data from Oregon Health Insurance Experiment \citep{finkelstein}.
  \end{abstract}

Keywords: optimal welfare,  cross-fitting, double/debiased machine learning, margin assumption, uniformity, Roy model, selection problem, partial identification
\section{Introduction}

Economists are often interested in bounds on parameters when parameters themselves are not point-identified  \citep{Manski89,Manski90,Manski}. Examples include quantiles of heterogeneous treatment effects and other distributional measures beyond the mean \citep{FanPark}. Baseline or pre-treatment covariates often contain valuable information that can tighten these bounds \citep{ManskiPepper}. However, in practice, sharp bounds are rarely utilized because their estimators usually have non-standard distributions driven by noisy first-stage estimators of unknown conditional distributions. These challenges are not unique to partial identification and also arise in related areas, such as statistical treatment choice \citep{LuedtkeLaan, KitagawaTetenov, AtheyWager2, MbakopTabord}. 

This paper develops estimation and inference methods for the quantities taking the form
 \begin{align}
\label{eq:mainpsiintro}
\psi_0:= \E_X [\min_{t \in \mathcal{T}} \phi (t, \nu_0(X))],
\end{align}
where $X$ is a covariate vector,  $\mathcal{T}$ is a finite index set, and  $x \mapsto \nu_0(\cdot) = (\nu_{j0}(\cdot))_{j=1}^d$ is a $d$-dimensional nuisance parameter whose elements $\nu_{j0}(x)$ are functions of covariates, such as conditional expectations.  As the simplest case, one can think of the optimal welfare which appears in e.g., \cite{LuedtkeLaan} or sharp bound on distributional effects \citep{FanPark}. The paper's contribution is to deliver a debiased inference on $\psi_0$ that is first-order insensitive to the misclassification mistake in the identity of the binding constraint. In particular, its distribution is the same as if the true value of the minimizer \eqref{eq:mainpsiintro} were known. Additionally, the paper establishes the validity of a weighted bootstrap method, which holds the estimated minimizer fixed while bootstrapping the second-stage statistic, providing a valid distributional approximation.

The paper illustrates the usefulness of the proposed approach by considering two applications  in applied microeconomics. In particular, we discuss in detail a sharp version of Roy model bounds as studied in \cite{MourifieHenry} as well as Horowitz-Manski-Lee bounds with discrete-valued outcomes a version of which have been also studied in concurrent, independent work of \cite{kroft2024leeboundsmultilayeredsample}. Revisiting  Oregon Health Insurance Experiment \cite{finkelstein}, we find our methodology useful in determining the direction of treatment effect in the presence of non-response bias as well as tightening the bounds, echoing earlier work in \cite{SemSupp2}.

The rest of the paper is organized as follows. Section \ref{sec:litreview} gives a literature review. Section \ref{sec:setup} introduces the framework and provides two stylized examples. Section \ref{sec:overview} offers an informal preview of the results. Section \ref{sec:mainr} presents the formal asymptotic theory and discusses low-level condition for the margin assumption in the context of a single-index model with continuous covariates. Section \ref{sec:appl} applies the proposed theory to sharp bounds in the Roy model and Horowitz-Manski-Lee bounds in selection problems. Section \ref{sec:empirical2} provides numerical evidence for the methods developed in the article. All proofs are in Appendix \ref{sec:proofs}.

\subsection{Literature Review}
\label{sec:litreview}

This paper is related to two lines of research: partial identification and statistical treatment choice.

\paragraph{Bounds, Convex Optimization, and Directionally Differentiable Functionals.}   Set identification is a vast area of research,  encompassing a wide variety of approaches: linear and quadratic programming, random set theory, support function, and moment inequalities \citep{Manski90, ManskiPepper, Manski:2002, HaileTamer, CHT, BM, Molinari2008, CilibertoTamer, LeeBound, Stoye, AndrewsShiECMA, BMM2, CCMS, CherNeweySantos, Gafarov, kallus2020localized, li2022discordant, henry2023role, acerenza2023marginal, ban2021nonparametric, bartalotti2021identifying, JLS, fava2024}, see e.g. \cite{Molinari:2018} or \cite{MolinariHandbook} for a  review. In the context 
of distributional effects  \citep{Makarov, Manski,HSC, FanPark,FanPark2, Tetenov, FanZhu, FirpoRidder}, the first discussion of estimation can be traced to \cite{FanPark}, where, on p.945 they sketch a plug-in estimation approach  without statistical guarantees. Targeting the envelope function $\inf_{t \in T} s(t,x)$, the work by \cite{CLR} proposes a plug-in approach based on the least squares series estimators, where large sample inference is based on the strong approximation of a sequence of series or kernel-based empirical processes. Switching the focus from the envelope function to its best linear predictor, \cite{CCMS} proposes a root-$N$ consistent and uniformly asymptotically Gaussian estimator of the target parameter, relying on the first-stage series estimators. Finally, recent work by \cite{LeeSungwon} focuses on bounds on conditional distributions of treatment effects.  That is, most inference work focuses on the envelope function, rather than its mean value, which makes the lack of differentiability of $x \mapsto \min (x,0)$ at the kink point $x=0$ a common concern (e.g., \citet{FangSantos}).  Finally, the paper contributes to a growing literature on machine learning for bounds and partially identified models \citep{kallus2019assessing, jeong2020robust, SemJoE} and sensitivity analysis \citep{DornGuo, DornGuoKallus, Bonvini_2021, Bonvini_2022}, see e.g., \cite{Kennedy_review} for the review.

\paragraph{Statistical treatment choice. }  Statistical treatment choice is a vast area of research, focusing on two distinct questions: learning the best policy \citep{QianMurphy, KitagawaTetenov, MbakopTabord, AtheyWager2, Sun} for a given criterion function and inference on the optimal  value  of the criterion itself \citep[e.g.,][]{LuedtkeLaan} in an unconstrained policy class.
In the first stream, recent work focused on various robustness aspects of existing criteria functions \citep[e.g.,][]{ishihara2021evidence, adjaho2023externally} or targeting welfare criteria that are partially identified \citep{Stoye, Pu_2021, Cui_2021, kitagawa2023treatment, yata2023optimal, dadamo2022orthogonal, christensen2023optimal, benmichael2023policy, olea2023decision, cui2024policy}. For example, such criteria may arise in  asymmetric loss functions \citep{babii2021binary}, partial welfare ordering \citep{han2021optimal,firpo2023loss} or distributional welfare based on quantile treatment effects \citep{cui2024policy}.

\section{Setup}
\label{sec:setup}

This paper studies the aggregated intersection of regression functions
\begin{align}
\label{eq:mainpsi}
\psi_0 := \E_X \big[\min_{t \in \mathcal{T}} \phi(t, \nu_0(X)) \big]
\end{align}
where $X$ is a covariate vector, $\mathcal{T}$ is a \textit{finite} index set, and 
$x \mapsto \nu_0(\cdot) = (\nu_{j0}(\cdot))_{j=1}^d$ is a $d$-dimensional nuisance parameter whose elements $\nu_{j0}(x)$ are functions of covariates, such as conditional expectations.  
For each element $t$ of the set $\mathcal{T}$, $\phi(t, \cdot): \mathrm{R}^d \rightarrow \mathrm{R}$ is a \textit{known} scalar function of the vector $\nu_0$, which could represent a projection onto the Euclidean axis. 
This function can be expressed as a conditional expectation
\begin{align}
\label{eq:stx}
\phi(t, \nu_0(x)) = \E \big[ \rho(W, t, \xi_0(X)) \mid X = x \big]
\end{align}
where $W$ is the data vector and $\rho(W, t, \xi_0)$ is an observed random variable that depends on a nuisance parameter $\xi_0$. 

Examples of $\psi_0$ include Fréchet-Hoeffding bounds, \citep{Makarov} bounds on distributional effects, and sharp versions of \citep*{BalkePearl1994, BalkePearl1997} bounds. In statistical treatment choice, examples include optimal welfare in an unconstrained policy class \citep{Manski2004,QianMurphy, LuedtkeLaan,KitagawaTetenov}. As a stylized example, this section revisits optimal welfare in statistical treatment choice and Makarov bounds on distributional effects \citep{FanPark}.

\begin{example}[Optimal Welfare]
\label{ex:welfare}
Let $D$ be a discrete-valued treatment variable taking values in a finite set $\mathcal{D}$. Let $Y(d)$ be a potential outcome, and let $Y = \sum_{d \in \mathcal{D}} Y(d) 1\{D = d\}$ be the observed outcome. The data vector is $W = (D, X, Y)$. 
Let $m(d, x) = \E \big[ Y \mid D = d, X = x \big]$ be the conditional expectation function. Under the unconfoundedness assumption
\begin{align}
\label{eq:ci}
(Y(d))_{d \in \mathcal{D}} \ci D \mid X,
\end{align}
the conditional means of potential outcomes are identified as
$$
\E \big[ Y(d) \mid X = x \big] = m(d, x).
$$
The negative attained welfare is
$$
\psi_0 = -\E \big[ \max_{d \in \mathcal{D}} m(d, X) \big] = \E \big[ \min_{d \in \mathcal{D}} -m(d, X) \big],
$$
which is a special case of \eqref{eq:mainpsi} with $\mathcal{T} = \mathcal{D}$ and
\begin{align*}
\nu_0(x) &= (m(d, x))_{d \in \mathcal{D}}, \qquad \phi(d, v) = -v_d, \quad d \in \mathcal{D}.
\end{align*}
The "unbiased" signal for $-m(d, x)$ is the \citep{Robins} orthogonal score
\begin{equation}
\label{eq:robins}
\rho(W, d, \xi_0) = -\frac{1\{D = d\}}{\mu_{d0}(X)} \big( Y - m(d, X) \big) - m(d, X),
\end{equation}
where the propensity score is
\begin{align*}
\mu_{d0}(x) = \Pr(D = d \mid X = x)
\end{align*}
and the nuisance parameter is
\begin{align*}
\xi_0(x) = \big(\nu_0(x), (\mu_{d0}(x))_{d \in \mathcal{D}}\big).
\end{align*}
When the treatment is binary, that is, $\mathcal{D} = \{0, 1\}$, the parameter $\psi_0$ reduces to
$$
\psi_0 = -\E \big[ \max(m(1, X), m(0, X)) \big]
$$
and $\mu_{10}(X) + \mu_{00}(X) = 1 \text{ a.s. }$
\end{example}

\begin{example}[Makarov Bounds on Distributional Effects]
\label{ex:makarov}
Consider the setup of Example \ref{ex:welfare} with $\mathcal{D} = \{0, 1\}$. Let $F_1(\cdot \mid x)$ and $F_0(\cdot \mid x)$ be the conditional Cumulative Distribution Functions (CDFs) of the potential outcomes $Y(1)$ and $Y(0)$, respectively, identified under \eqref{eq:ci}. Let $F_{Y(1) - Y(0)}(d)$ be the CDF of the treatment effect $Y(1) - Y(0)$, and let $F_{Y(1) - Y(0)}(d \mid x)$ be the conditional CDF. As shown in \citep{Makarov,FanPark,FirpoRidder}, the sharp bounds on the CDF of the treatment effect $F_{Y(1) - Y(0)}(d)$ are
\begin{align}
\label{eq:reg2}
\pi_L(d) &:= \E \big[\sup_{t \in \mathrm{R}} \max(F_1(t \mid X) - F_0(t - d \mid X)_{-}, 0)\big] \\
\pi_U(d) &:= \E \big[\inf_{t \in \mathrm{R}} \min(F_1(t \mid X) - F_0(t - d \mid X)_{-}, 0) + 1\big]
\end{align}
where $F(\cdot)$ and $F(\cdot)_{-}$ is the right-hand limit (i.e., regular CDF) and  the left-hand limit of the CDF, respectively. If the distributions $Y \mid D = 1, X = x$ and $Y \mid D = 0, X = x$ have finite support, their respective CDFs are step functions with finitely many jumps whose locations on $x$-axis are denoted by $\mathcal{T}_1$ and $\mathcal{T}_0$, respectively.

Consider the share of subjects negatively affected by the treatment. It is upper and lower bounded as 
$$
\pi_L(0) \leq \Pr (Y(1) - Y(0) \leq 0) \leq \pi_U(0).
$$
The upper bound $\pi_U(0)$ is a special case of \eqref{eq:mainpsi} with $\mathcal{T} = \{0\} \cup \mathcal{T}_1 \cup \mathcal{T}_0$, $\nu^1_0(x) = (F_1(t \mid X = x))_{t \in \mathcal{T}}$ and $\nu^0_0(x) = ({F_0(t \mid X = x)}_{-})_{t \in \mathcal{T}}$ and 
$$
\phi(t, v^1, v^0) = v^1_{t} - v^0_{t}, \quad t \in \{0\} \cup \mathcal{T}_1 \cup \mathcal{T}_0.
$$
The ``unbiased'' signal is the \citep{Robins}-type orthogonal score
\begin{align}
\label{eq:robins2}
\rho(W, t, \xi_0) &= \frac{D 1\{Y \leq   t\}}{\mu_{10}(X)} \big(1\{Y \leq  t\} - F_1(t \mid X)\big) + F_1(t \mid X) \\
&\quad - \frac{(1 - D) 1\{Y < t\}}{\mu_{00}(X)} \big(1\{Y < t\} - F_0(t \mid X)_{-}\big) - F_0(t \mid X)_{-}, \quad t \in \mathcal{T}_1 \cup \mathcal{T}_0
\end{align}
and $\rho(W, 0, \xi_0) = 0 \text{ a.s.}$. The propensity score is as in Example \ref{ex:welfare}, and the nuisance parameter is
$$
\xi_0(x) = (\nu^1_0(x),\nu^0_0(x), \mu_{10}(x), \mu_{00}(x))
$$
and $\mu_{10}(X) + \mu_{00}(X) = 1 \text{ a.s. }$.
\end{example}

Example \ref{ex:makarov} describes \cite{Makarov} bounds on the treatment effect CDF, previously studied in \cite{FanPark} and \cite{FirpoRidder}, among others. A special case of this example with binary outcomes was studied in \cite{kallus2022whats}, who proposed debiased inference for $\pi_L(0)$ and $\pi_U(0)$. Other interesting examples of this parameter are described in Section \ref{sec:appl}.

\section{Overview of Estimation and Inference}
\label{sec:overview}

In this section, I introduce the estimator of the parameter of interest and describe two inferential approaches. Let me briefly review the notation. Recall that $W$ is a data vector, $\mathcal{T}$ is an index set, and each function $\phi(t, \nu_0(x))$ is a conditional expectation function of an observed random variable $\rho(W, t, \xi_0)$
$$
\phi(t, \nu_0(x)) = \E \big[ \rho(W, t, \xi_0) \mid X = x \big].
$$
The identity of the minimizer is assumed unique
\begin{align}
\label{eq:unique}
t_0(x) := \arg \min_{t \in \mathcal{T}} \phi(t, \nu_0(x))
\end{align}
almost surely in $P_X$. 
The \textit{envelope regression function} is
$$
\min_{t \in \mathcal{T}} \phi(t, \nu_0(x)) = \phi(t_0(x), \nu_0(x)).
$$
Notice that this function can be written as
$$
\min_{t \in \mathcal{T}} \phi(t, \nu_0(x)) = \sum_{t \in \mathcal{T}} \phi(t, \nu_0(x)) 1\{t = \arg \min_{t \in \mathcal{T}} \phi(t, \nu_0(x))\}.
$$
Replacing each function $\phi(t, \nu_0(x))$ by its respective "unbiased signal" $\rho(W, t, \xi_0)$ gives the \textit{envelope moment} function
$$
 \sum_{t \in \mathcal{T}} \rho(W, t, \xi_0) 1\{t = \arg \min_{t \in \mathcal{T}} \phi(t, \nu_0(X))\} 
$$
where $\xi_0$ is the true value of the nuisance parameter $\xi$. By the law of iterated expectations,
$$
\psi_0 = \E[\rho(W, t_0, \xi_0)] = \E_X \big[\E[\rho(W, t_0, \xi_0) \mid X]\big] = \E_X \big[\min_{t \in \mathcal{T}} \phi(t, \nu_0(X))\big].
$$
The paper relies on standard cross-fitting \citep{schick1986asymptotically}, as commonly used in debiased machine learning \citep{chernozhukov2016double, AtheyWager2, LRSP}.

\begin{definition}[Cross-Fitting]
\mbox{}
\label{sampling}
\begin{compactenum} 
\item For a random sample of size $N$, denote a $K$-fold random partition of the sample indices $[N] = \{1, 2, \dots, N\}$ by $(J_k)_{k=1}^K$, where $K$ is the number of partitions and the sample size of each fold is $n = N / K$. For each $k \in [K] = \{1, 2, \dots, K\}$, define $J_k^c = [N] \setminus J_k$.
\item For each $k \in [K]$, construct an estimator $\widehat{\xi}_k = \widehat{\xi}(W_{i \in J_k^c})$ of the nuisance parameter $\xi_0$ using only the data $\{W_i : i \in J_k^c\}$. Define the first-stage fitted values
\begin{align}
\label{eq:tkxi} 
\widehat{t}_i = \widehat{t}(X_i) = \widehat{t}_k(X_i) := \arg \min_{t \in \mathcal{T}} \phi(t, \widehat{\nu}_k(X_i)), \quad i \in J_k, \\
\widehat{\xi}_i := \widehat{\xi}(X_i) = \widehat{\xi}_k(X_i), \quad i \in J_k. \nonumber
\end{align}
\end{compactenum}
\end{definition}

\begin{definition}[Estimator]
\label{def:envscore}
Given the first-stage fitted values, define
$$
\widehat{\psi} := \frac{1}{N} \sum_{i=1}^N \rho(W_i, \widehat{t}_i, \widehat{\xi}_i).
$$
\end{definition}

\begin{definition}[Multiplier Bootstrap]
\label{def:bb}
Let $(e_i)_{i=1}^N$ be a sequence of i.i.d. Exp(1) random variables independent of the data. Define
$$
\widetilde{\psi} = \frac{1}{N} \sum_{i=1}^N \frac{e_i}{\bar{e}} \rho(W_i, \widehat{t}_i, \widehat{\xi}_i), \label{eq:psi:boot}
$$
where $\bar{e} = N^{-1} \sum_{i=1}^N e_i$.
\end{definition}

Under some conditions on the nuisance parameter discussed below, the proposed estimator enjoys the following properties
\begin{enumerate}

\item With probability (w.p.) $\rightarrow 1$, the estimator converges at a root-$N$ rate:
$$
| \widehat{\psi} - \psi_0 | = O_P(1/\sqrt{N}) = o_P(1). \label{eq:urate}
$$

\item The estimator $\widehat{\psi}$ is asymptotically linear:
$$
\sqrt{N} (\widehat{\psi} - \psi_0) = \sqrt{N} \bigg( N^{-1} \sum_{i=1}^N \rho(W_i, t_0, \xi_0) - \psi_0 \bigg) + o_P(1),
$$
and, therefore, asymptotically Gaussian:
$$
\sqrt{N} (\widehat{\psi} - \psi_0) \Rightarrow^d N(0, V_0). \label{eq:limit}
$$
Its asymptotic variance:
\begin{equation}
V_0 := \E \big[\rho^2(W, t_0(X), \xi_0(X))\big] - \psi_0^2 \label{eq:psieta0}
\end{equation}
can be estimated by the sample analog:
\begin{equation}
\widehat{V} = N^{-1} \sum_{i=1}^N \rho^2(W_i, \widehat{t}_i, \widehat{\xi}_i) - \widehat{\psi}^2. \label{eq:varhat}
\end{equation}
\end{enumerate}

The paper establishes the theoretical framework for two inferential approaches. A plug-in $100(1-\alpha)\%$ confidence interval (CI) for $\psi_0$ can be constructed as
\begin{equation}
\label{eq:plugin}
CI_{1-\alpha} := \big(\widehat{\psi} - z_{1-\alpha/2} \sqrt{\widehat{V}/N}, \, \widehat{\psi} + z_{1-\alpha/2} \sqrt{\widehat{V}/N}\big),
\end{equation}
where $z_{1-\alpha}$ is the $(1-\alpha)$-quantile of $N(0, 1)$. As shown in Theorem \ref{thm:consistency}, the estimator $\widehat{V}$ is consistent for $V_0$, which implies
$$
\Pr \big(\psi_0 \in CI_{1-\alpha}\big) \rightarrow 1-\alpha, \quad N \rightarrow \infty.
$$
The plug-in variance estimator may be sensitive to biased estimation of $\xi$, which could affect the coverage of the plug-in confidence interval in small samples. An alternative to the plug-in procedure is to consider a bootstrap analog of the estimator 
$\widehat \psi$. A bootstrap confidence interval $CI^b_{1-\alpha}$ can be constructed as
\begin{equation}
\label{eq:boot}
CI^b_{1-\alpha} := \big(\widehat{\psi} + N^{-1/2} \widehat{C}_{\alpha/2}, \, \widehat{\psi} + N^{-1/2} \widehat{C}_{1-\alpha/2}\big),
\end{equation}
where the critical values $\widehat{C}_{\alpha/2}$ and $\widehat{C}_{1-\alpha/2}$ are the $\alpha/2$ and $1-\alpha/2$ quantiles of the bootstrapped statistic $\sqrt{N} (\widetilde{\psi} - \widehat{\psi})$. Thus
\begin{equation}
\Pr \big(\psi_0 \in CI^b_{1-\alpha}\big) \rightarrow 1-\alpha, \quad N \rightarrow \infty. \label{eq:coverageboot}
\end{equation}

\begin{remark}[Uniqueness of Minimizer]
\label{rm:unique}
The paper's results rely on the assumption that the minimizer $t_0(X)$ in \eqref{eq:unique} is almost surely unique. In the context of Example \ref{ex:welfare}, this condition simplifies to
\begin{align}
\label{eq:uniquemain}
P_X(m(1, X) - m(0, X) \neq 0) = 1.
\end{align}
This assumption is fundamental as it allows us to stay within the standard Gaussian framework. If this condition holds, the parameter $\psi_0$ is a pathwise differentiable parameter with a finite efficiency bound \citep{LuedtkeLaan} (cf. Lemma \ref{lem:LvdL} in Appendix). Otherwise, regular estimators of the optimal welfare may not exist \citep{HiranoPorter2012}.
\end{remark}

Requiring the minimizer to be unique is a non-trivial restriction on the data generating process. Remark \ref{rm:smooth} sketches a smoothing approach that could be used if condition \eqref{eq:uniquemain} is not plausible.

\begin{remark}[Smoothing Alternative]
\label{rm:smooth}
Following \citep{levis2023covariateassisted}, consider a log-sum-exp (LSE) function
\begin{equation}\label{eq:LSE}
g_{\widetilde{\kappa}}(\boldsymbol{v}) = \frac{1}{\widetilde{\kappa}}\log \left( \exp^{\widetilde{\kappa} v} + 1 \right), \text{ for } \boldsymbol{v} \in \mathbb{R},
\end{equation}
where $\widetilde{\kappa}$ is a tuning parameter and $\boldsymbol{v} \in \mathbb{R}$ is the argument. Noting that
$$
\max \left\{ \boldsymbol{v}, 0 \right\}<g_{\widetilde{\kappa}}(\boldsymbol{v}) \leq \max \left\{\boldsymbol{v}, 0 \right\}+\frac{\log 2}{\widetilde{\kappa}}
$$
allows to bound the approximation bias $\E [g_{\widetilde{\kappa}}(\nu_0(X)) ] - \E [ \max (m(1,X)- m(0,X),0)]$. The concurrent and independent  work by \citep{levis2023covariateassisted}  develops asymptotic efficiency theory for the smoothened analog.
\end{remark}

\section{Theoretical Results}
\label{sec:mainr}

Section \ref{sec:ass} states the assumptions required for asymptotic theory. Section \ref{sec:ma} describes a key condition required for asymptotic theory and verifies it for  single-index models.  Section \ref{sec:res} states the theoretical results.

\subsection{Assumptions}

\label{sec:ass}

Assumption \ref{ass:smallbias} ensures that the moment functions $\rho(W, t, \xi)$ are robust to first-order biases in the nuisance parameter $\xi$ uniformly over the index set $\mathcal{T}$. 

\begin{assumption}[Small Bias Condition]
\label{ass:smallbias}
There exists a sequence $\epsilon_N = o(1)$, such that with probability at least $1-\epsilon_N$, for every partition index $k \in [K]$, the first stage estimate $\widehat{\xi}_k$, obtained by cross-fitting, belongs to a shrinking neighborhood of $\xi_0$, denoted by $\Xi_N$. Uniformly over $\Xi_N$, the following mean square convergence holds:
\begin{align}
\label{eq:bn}
 B_N=\sup_{ \xi \in \Xi_N} \sup_{t \in \mathcal{T}} \sup_{x \in \mathcal{X}} \sqrt{N} |\E [ \rho(W,t,\xi)- \rho(W,t,\xi_0) \mid X=x] | = o(1). 
 \end{align}
 Furthermore, the second-order terms are bounded as
 \begin{align}
 \label{eq:lambdan}
 \Lambda_N =\sup_{ \xi \in \Xi_N} \sup_{t \in \mathcal{T}} \sup_{x \in \mathcal{X}}  \E [(\rho(W,t,\xi)- \rho(W,t,\xi_0))^2 \mid X=x] = o(1). 
 \end{align}
\end{assumption}

The small bias property is often attained via orthogonalization, a technique that has been widely studied in, e.g. \cite{Newey1994}, \cite{chernozhukov2016double}, and \cite{LRSP}. For example, if   $\rho(W, t, \xi) = \rho(W,t)$ does not involve any nuisance parameters,  Assumption \ref{ass:smallbias} is automatically satisfied.

\begin{continuance}{\ref{ex:welfare}}
Let $\mathcal{D} =\{0,1\}$. Consider an IPW-type signal of \citep{HIR2003}   taking the form 
$$ \rho(W,1) =  \dfrac{D}{\mu_{10}(X)} Y, \quad  \rho(W,0) = \dfrac{1-D}{\mu_{00}(X)} Y$$ 
where the propensity score is assumed known. Thus, Assumption \ref{ass:smallbias} is automatically satisfied with $B_N = \Lambda_N=0$.
\end{continuance}

 Assumption \ref{ass:smallbias} is shown to be satisfied if the signal is  orthogonal with respect to the nuisance function $\xi_0$ whose estimator converges at a $o(N^{-1/4})$ mean square rate (e.g., \citep{SemCher}).

\begin{continuance}{\ref{ex:welfare}}
Let $\mathcal{D} = \{0,1\}$.  Consider the Robins-type signal of \citep{Robins} with $\rho(W,1,\xi)$ and $\rho(W,0,\xi)$ defined as in \eqref{eq:robins}. Suppose each function $m(1,X)$ and $m(0,X)$ is estimated at a mean square rate $m_N$, and the function $\mu_{10}(X)$ is estimated at rate $\mu_N$. Under  Assumption 4.11 in \citep{SemCher}, Assumption \ref{ass:smallbias} holds if 
$$
 B_N =  O(\sqrt{N} m_N \cdot \mu_N)  =o(1), \qquad \Lambda_N = O (m_N +  \mu_N) = o(1).
$$ 
\end{continuance}

Assumption \ref{ass:rate} is a rate condition on the nuisance parameter $\nu$ entering \eqref{eq:mainpsi}. 

\begin{assumption}[Rate]
\label{ass:rate}
There exists a sequence $\epsilon_N = o(1)$, such that with probability at least $1-\epsilon_N$, for all $k \in [K]$, the first stage nuisance estimate $\widehat{\nu}_k (\cdot)$ belongs to a shrinking neighborhood of its true value $\nu_0(\cdot)$, denoted by $\mathcal{T}^{\nu}_N$. Uniformly over $\mathcal{T}^{\nu}_N$, the following worst-case rate bound holds. 
$$
\sup_{ \nu \in \mathcal{T}^{\nu}_N} \sup_{x \in \mathcal{X}} \| \nu(x) - \nu_0(x) \| \leq \nu^{\infty}_N = o (N^{-1/4}).
$$
\end{assumption}

When the convergence is required in mean square ($\ell_2$) norm, Assumption \ref{ass:rate} is a classic assumption in the semiparametric literature (e.g., \citep{Newey1994}). The example below demonstrates the plausibility of Assumption \ref{ass:rate} in $\ell_{\infty}$ norm.

\begin{continuance}{\ref{ex:welfare}}
Let $\mathcal{D} = \{0,1\}$.  Suppose regression functions in Example \ref{ex:welfare} obey linear index restrictions
\begin{align*}
m(1,X)= X^{\prime} \gamma_1, \quad m(0,X) = X^{\prime} \gamma_0. 
\end{align*}
Then, the $\ell_1$-regularized estimator  of \citep{Program} provides a convergence rate bound in $\ell_1$-norm in terms of the sparsity index, which suffices for a uniform rate bound on functions $m(1,x)$ and $m(0,x)$.
\end{continuance}

\begin{assumption}[Regularity Conditions] 
\label{ass:reg} The following technical conditions hold. (1) The moments  are uniformly bounded a.s.
\begin{align}
\label{eq:boundedsm3}
\sup_{ \xi \in \Xi_N}\sup_{t \in \mathcal{T}} \sup_{x \in \mathcal{X}}  | \rho(W,t,\xi) | \leq B_{\rho}.
\end{align}
(2) The bounded derivative condition holds:
\begin{align}
\label{eq:boundedsm2}
\sup_{\nu\in\mathcal{T}_N^\nu} \sup_{x\in\mathcal{X}}\sup_{t \in \mathcal{T}} \| \partial \phi (t, v(x))/ \partial v  \|  \leq B_{\phi}.
\end{align}
\end{assumption}

Assumption \ref{ass:reg} is a standard regularity condition\begin{footnote} {The proof of Theorem \ref{thm:closedform} only requires that the conditional second moment is uniformly bounded
$
\label{eq:boundedsm}
\sup_{ \xi \in \Xi_N}\sup_{t \in \mathcal{T}} \sup_{x \in \mathcal{X}} \E [ \rho^2(W,t,\xi) \mid X=x] \leq B_{\rho}.
$
However, the a.s. bound \eqref{eq:boundedsm3}  is easier to verify in  our examples which is why it is chosen for Assumption \ref{ass:reg}.} \end{footnote}. For example, for the case of Example \ref{ex:welfare},  the condition \eqref{eq:boundedsm2} automatically holds with $B_{\phi}=1$ since $\phi (d, v)=-\mathbf{e}_{d} v$ corresponds to taking the $d$'th element of the nuisance parameter $\nu_0(x) = (m(d,x))_{d \in \mathcal{D}}$.

\begin{assumption}[Margin Assumption]
\label{ass:ma}
There exist finite positive constants $\bar{B}, \delta \in (0, \infty)$  such that
\begin{align}
\label{eq:ma}
\sup_{(j,k) \in \mathcal{T}, \quad k \neq j} \Pr \left( 0 \leq  \phi (j, \nu_0(X))-  \phi (t, \nu_0(X)) \leq t \right) \leq \bar{B} t, \quad \forall t \in (0, \delta).
\end{align}
\end{assumption}

Assumption \ref{ass:ma} is  a margin condition that ensures separation between minimizers and non-minimizers in order to control the  first-order effect of classification mistakes. It is a standard assumption in classification literature  \citep{MammenTsybakov,Tsybakov,QianMurphy}, policy learning \citep{KitagawaTetenov,MbakopTabord} and debiased inference \citep{kallus2022whats,SemJoE,SemSupp2}.  Section \ref{sec:ma} verifies Assumption \ref{ass:ma} for a special case of single-index models.

\subsection{Discussion of  Assumption \ref{ass:ma}}
\label{sec:ma}

In this section, I demonstrate the plausibility of Assumption \ref{ass:ma} in the context of Example \ref{ex:welfare}.  

\begin{remark}[Binary Treatment]
\label{lem:binary}
Suppose $\mathcal{D} = \{0,1\}$.  If the conditional average treatment effect $m(1,X) -m(0,X)$ has a bounded density
\begin{align}
\label{eq:margin2}
\exists \delta>0, \ s.t. \sup_{t \in (-\delta, \delta) } f_{ m(1,X) - m(0,X)} (t) \leq \bar{B}_f,
\end{align}
then  Assumption \ref{ass:ma} is satisfied with $\bar{B}= 2 \bar{B}_f$.
\end{remark}

Remark  \ref{lem:binary} verifies that Assumption \ref{ass:ma} holds with $\bar{B} = 2 \bar{B}_f$ as long as the index set $\mathcal{T}$ has two elements and their difference $m(1,X)-m(0,X)$ has a bounded unconditional density. This is a known result in the literature  \citep[e.g.,][]{Tsybakov,KitagawaTetenov,kallus2022whats}.

Lemma \ref{ex:1} describes a class of linear models where the presence of covariate vector $X$ with a smooth distribution suffices for Assumption \ref{ass:ma}. Suppose the  expectation functions are partially linear 
\begin{align}
\label{eq:linearmodel}
m(t, \widetilde{X}) = X^{\prime} \gamma_t + g_t(\bar{X}), \quad t \in \mathcal{T}.
\end{align}
where $\widetilde{X} = (X, \bar{X})$ and $ \bar{X} $ is independent  of $X$.
\begin{lemma}[Linear Model]
\label{ex:1}
 Suppose
\begin{enumerate}
\item[(i)] For any $j,k \in \mathcal{T}$, $\gamma_k \neq \gamma_j$. 
    
      \item[(ii)] For some $M < \infty$, $\max_{t \in \mathcal{T}} |g_t(\bar{X})| \leq M$ almost surely.
      
      \item[(iii)]  The vector $X$ obeys a smoothness condition
      \begin{align}
\label{eq:macov}
\sup_{\delta \in \mathrm{R}^{p_X}, \| \delta \|=1} \Pr \left( 0 <  X^{\prime} \delta < t \right) \leq \bar{B} t, \quad t \rightarrow 0.
\end{align}

\end{enumerate}

Then, Assumption \ref{ass:ma} is satisfied.
\end{lemma}

Lemma \ref{ex:1} verifies Assumption \ref{ass:ma} as long as the model \eqref{eq:linearmodel} includes a linear component obeying \eqref{eq:macov}. Condition (i) ensures that the mapping $x \mapsto \min_{t \in \mathcal{T}} (x^{\prime} \gamma_t)$ has a unique minimum. Condition (ii) accommodates the inclusion of arbitrary covariates (i.e., either continuous or discrete), provided their influence on the conditional mean remains almost surely bounded. Condition (iii) is a smoothness condition on $X$ similar to those in the analysis of  least absolute deviation  in \cite{Powell}  or the analysis of support function in  \cite{CCMS}. For example, if $X \sim N(h, \Sigma)$ is a Gaussian $p_X$-vector, the scalar $X^{\prime} \delta \sim N(\delta^{\prime} h, \delta^{\prime} \Sigma \delta)$, and the condition \eqref{eq:macov} holds with $ \bar{B} = \sqrt{2 \pi \delta^{\prime} \Sigma \delta}^{-1} \leq (\sqrt{ 2 \pi})^{-1} \lambda^{-1/2}_{\min} (\Sigma)$.

Lemma \ref{ex:3} extends the result of Lemma \ref{ex:1} to nonlinear models. Suppose the  expectation functions are
\begin{align}
\label{eq:linearmodel2}
m(t, X) = F(X^{\prime} \gamma_t), \quad t \in \mathcal{T}.
\end{align}

\begin{lemma}[Nonlinear Model]
\label{ex:3}
  Suppose
\begin{enumerate}
     
        \item[(i)] Conditions (i)-(iii) of Lemma \ref{ex:1} hold.
  
     \item[(ii)] The covariate vector $X$ is $B_X$-bounded, and the support of $\cup_{t \in \mathcal{T}} X^{\prime} \gamma_t$, denoted by $ \mathcal{X}$, is compact.
     
    \item[(ii)]  The link function derivative is bounded from below on $\mathcal{X}$
     \begin{align*}
   \inf_{ t \in \mathcal{X}}  \frac{dF(t)}{dt} \geq \underline{f} >0.
\end{align*}

\end{enumerate}
Then, Assumption \ref{ass:ma} is satisfied.
\end{lemma}

Lemma \ref{ex:3} verifies Assumption \ref{ass:ma} for a single-index model with a monotone link function provided the linear index obeys \eqref{eq:ma}. Condition (iii) is satisfied for a large class of link functions, such as probit, logit or uniform $U[-t, t]$ where $\mathcal{X}$ is included in $[-t,t]$.

In conclusion, let me point out that Assumption \ref{ass:ma} can be viewed as a special case of a general form of the margin assumption
\begin{align}
\label{eq:ma2}
\sup_{(j,k) \in \mathcal{T}, \quad k \neq j} \Pr \left( 0 \leq  \phi (j, \nu_0(X))-  \phi (t, \nu_0(X)) \leq t \right) \leq \bar{B} t^{\widetilde{\alpha}}, \quad \forall t \in (0, \delta), \quad \text{ for some } \widetilde{\alpha} \in (0,1]
\end{align}
that is routinely imposed in standard debiased inference \citep[e.g.,][]{LuedtkeLaan,kallus2020assessing,SemSupp2}.  Relaxing this assumption remains an important open question in the literature. When this assumption is violated, the cross-fit plug-in estimators have a non-standard, heavy-tailed distribution which makes standard Wald-type inference not valid \citep{LuedtkeLaan,ponomarev2024}.

\subsection{Asymptotic Results}
\label{sec:res}

\begin{theorem}[Asymptotic Theory]
\label{thm:closedform}
Under Assumptions \ref{ass:smallbias}--\ref{ass:reg}, the proposed estimator obeys the oracle property
\begin{align}
\label{eq:oracle}
\sqrt{N} ( N^{-1} \sum_{i=1}^N \rho(W_i, \widehat t_i,  \widehat \xi_i)    -N^{-1} \sum_{i=1}^N \rho(W_i, t_0, \xi_0)) = o_P(1).
\end{align}  
Therefore, it is asymptotically Gaussian 
$$
\sqrt{N} (N^{-1} \sum_{i=1}^N \rho(W_i, \widehat t_i,  \widehat \xi_i) - \psi_0) \Rightarrow^d N(0, V_0),
$$
with the asymptotic variance in \eqref{eq:psieta0}.
\end{theorem}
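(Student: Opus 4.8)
The plan is to prove the oracle property \eqref{eq:oracle} by splitting the gap between the feasible and oracle score averages into a part driven by the auxiliary nuisance $\nu$ and a part driven by the estimated classifier, each to be shown $o_P(N^{-1/2})$. Writing the envelope score compactly as $g(W;\nu,\tau)=g_{\tau(X)}(W;\nu)$ for a generic classifier $\tau(\cdot)$ and auxiliary value $\nu$, I would decompose, observation by observation,
\begin{equation*}
g(W_i,\widehat\eta_i)-g(W_i,\eta_0)=\underbrace{\big[g_{\widehat t(X_i)}(W_i;\widehat\nu_i)-g_{\widehat t(X_i)}(W_i;\nu_0)\big]}_{A_i}+\underbrace{\big[g_{\widehat t(X_i)}(W_i;\nu_0)-g_{t_0(X_i)}(W_i;\nu_0)\big]}_{B_i},
\end{equation*}
so that it suffices to show $N^{-1/2}\sum_i A_i=o_P(1)$ and $N^{-1/2}\sum_i B_i=o_P(1)$ separately.

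For the auxiliary-nuisance term I would run the standard cross-fitting argument. Conditioning on the out-of-fold data $\{V_j:j\in J_k^c\}$ that determine $\widehat\nu_k$ and $\widehat t_k$, the summands $\{A_i\}_{i\in J_k}$ are i.i.d.\ functions of the in-fold data. The conditional mean of each obeys $|\E[A_i\mid J_k^c]|\le\sup_t\sup_x|\E[g_t(W;\widehat\nu)-g_t(W;\nu_0)\mid X=x]|\le B_N/\sqrt N$ by Assumption \ref{ass:smallbias}, so the aggregate conditional mean contributes at most $B_N=o(1)$ to $N^{-1/2}\sum_i A_i$; the conditional second moment of each is at most $\Lambda_N$ by the same assumption, so the centered sum has conditional variance $O(\Lambda_N)=o(1)$. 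A conditional Chebyshev bound, taken on the high-probability event $\widehat\nu_k\in\mathcal T_N$, then gives $N^{-1/2}\sum_i A_i=o_P(1)$. This is routine Neyman-orthogonality bookkeeping and is not where the difficulty lies.

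The classification term $B_i$ is the crux, and I expect it to be the main obstacle. I would split it into its conditional mean given $(X_i,\widehat t)$ and a mean-zero fluctuation. The conditional mean equals the pointwise classification regret $r(X_i):=s_0(\widehat t(X_i),X_i)-s_0(t_0(X_i),X_i)\ge 0$, which vanishes unless $\widehat t(X_i)\ne t_0(X_i)$. The key observation---an envelope-theorem effect---is that a misclassification at $x$ forces two things at once: the $\ell_\infty$ rate of Assumption \ref{ass:rate} implies both that the incurred regret satisfies $r(x)\le 2s^\infty_N$ and that the suboptimality gap $\min_{t\ne t_0}s_0(t,x)-\min_t s_0(t,x)$ is itself at most $2s^\infty_N$ (otherwise the ordering of $\widehat s$ could not reverse that of $s_0$). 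Hence, by the margin condition \eqref{eq:ma},
\begin{equation*}
\E_X[r(X)]\le 2s^\infty_N\cdot\Pr\Big(0\le\min_{t\ne t_0}s_0(t,X)-\min_t s_0(t,X)\le 2s^\infty_N\Big)\le 4\bar B_f\,(s^\infty_N)^2.
\end{equation*}
Since $s^\infty_N=o(N^{-1/4})$, this yields $\sqrt N\,\E_X[r(X)]=O(\sqrt N(s^\infty_N)^2)=o(1)$. It is precisely this quadratic-in-rate bound---\emph{both} the probability of a mistake \emph{and} the size of each mistake being $O(s^\infty_N)$---that lets the margin and rate conditions jointly beat the $\sqrt N$ scaling, and making this two-sided accounting rigorous for general (non-binary) $T$ is the delicate part.

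For the mean-zero fluctuation of $B_i$ I would again use cross-fitting: conditional on $J_k^c$ the centered summands are i.i.d.\ with conditional second moment at most $4\bar B\,\Pr(\widehat t(X_i)\ne t_0(X_i)\mid J_k^c)$ by the bounded-moment Assumption \ref{ass:reg}, and this misclassification probability is $O(s^\infty_N)=o(1)$ by the same margin/rate estimate, so the conditional variance of $N^{-1/2}\sum_{i\in J_k}(B_i-\E[B_i\mid X_i,\widehat t])$ is $o(1)$ and Chebyshev delivers $o_P(1)$. Combining the two terms gives the oracle property \eqref{eq:oracle}. Finally, the oracle average $N^{-1}\sum_i g(W_i,\eta_0)$ is an i.i.d.\ mean of $g_{t_0(X_i)}(W_i;\nu_0)$ with expectation $\psi_0$ and variance $V_\psi=\E[g(W,\eta_0)^2]-\psi_0^2$ as in \eqref{eq:psieta0}; the Lindeberg--L\'evy CLT gives $\sqrt N(N^{-1}\sum_i g(W_i,\eta_0)-\psi_0)\Rightarrow N(0,V_\psi)$, and Slutsky's theorem together with \eqref{eq:oracle} yields the stated asymptotic normality.
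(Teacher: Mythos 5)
Your proposal is correct and follows essentially the same route as the paper's own proof: the identical decomposition into an auxiliary-nuisance term and a classification term, the identical quadratic bias bound $O\left((s^{\infty}_N)^2\right)$ obtained by pairing the $\ell_{\infty}$ rate (size of each misclassification's regret) with the margin condition (probability of a misclassification), and the identical $O(s^{\infty}_N)$ bound on the conditional second moment of the classification term. The only cosmetic difference is that the paper delegates the final aggregation step (conditional bias $o(N^{-1/2})$ plus vanishing conditional variance under cross-fitting implies the oracle property and asymptotic normality) to Lemma A.3 of \cite{CherSem}, whereas you carry out the conditional Chebyshev/Markov and Slutsky arguments explicitly.
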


Theorem \ref{thm:closedform}   is my first main result\begin{footnote}{The oracle property is reminiscent of Neyman orthogonality in the double/debiased machine learning literature \citep{Neyman:1959,Neyman:1979,chernozhukov2016double}. In the context of Example \ref{ex:welfare}, the derivative calculation appears in the calculation of the efficiency bound (Lemma \ref{lem:LvdL}, see the proof of Theorem 1 in Supplement).  }\end{footnote}. As a special case, it nests recent debiased inference results for sharp Makarov bounds with a binary outcome \citep{kallus2022whats}  and for \cite{BalkePearl1994,BalkePearl1997} bounds in the concurrent, independent work of  \cite{levis2023covariateassisted}. The paper's contribution is to introduce a general framework of aggregated intersection of regression functions for which the oracle property also applies. The new applications of the framework include Roy model bounds and Horowitz-Manski-Lee bounds with discrete outcomes, discussed in Section \ref{sec:appl}. 

\begin{theorem}[Consistent Estimation of Asymptotic Variance]
\label{thm:consistency}
Under Assumptions \ref{ass:smallbias}--\ref{ass:reg}, the sample analog estimator $\widehat V$ in \eqref{eq:varhat} is consistent for $V_0$ in \eqref{eq:psieta0}, that is $\widehat V - V_0 = O_P (\nu^{\infty}_N + \Lambda_N^{1/2}) = o_P(1)$. \end{theorem}

Theorem \ref{thm:consistency} establishes consistency of the plug-in estimator of variance which suffices for the validity of the confidence interval \eqref{eq:plugin}.  Unlike the envelope score estimator $\widehat \psi$, the variance estimator $\widehat V$ is first-order sensitive to the mistakes in estimated minimizers and converges at rate $\nu^{\infty}_N$ rather than $(\nu^{\infty}_N)^2$.

\begin{theorem}[Bootstrap Inference]
\label{thm:boot}
Under Assumptions \ref{ass:smallbias}--\ref{ass:reg}, for $\widehat{c}_N(1-\alpha)$ being the $(1-\alpha)$-quantile of $ \widetilde{S}_N= \sqrt{N} (\widetilde \psi - \widehat \psi)$ under $P^e$,
 \begin{align*}
\Pr ( \sqrt{N} (\widehat \psi - \psi_0) \leq \widehat{c}_N(1-\alpha)) \rightarrow 1-\alpha,
\end{align*}
which implies \eqref{eq:coverageboot}.
\end{theorem}

Theorem \ref{thm:boot} establishes the validity of multiplier bootstrap inference. Note that the bootstrap-based inference is possible because the kink point (the point of non-differentiability) occurs with probability zero. A similar validity argument is used to establish bootstrap inference for support function as in \cite{CCMS} and \cite{SemJoE}.

\section{Applications}
\label{sec:appl}

\subsection{Roy model with binary outcome}
\label{sec:roy}

I begin by reviewing Roy model.   Let $Y(1)$ and $Y(0)$ be two binary potential utility values, corresponding to the choices of treatment $D=1$ and $D=0$, respectively. An individual chooses the treatment value $D$ according to
\begin{align}
\label{eq:choice}
Y(1) > Y(0) \Rightarrow D=1, \quad Y(1) < Y(0) \Rightarrow D=0.
\end{align}
If the potential outcomes $Y(1)$ and $Y(0)$ are equal, the choice is unspecified.  The observed data $W=(X,D,Y)$ consist of the covariates $X$, the choice variable $D$, and the observed outcome $Y=DY(1) + (1-D) Y(0)$.

\begin{proposition}[Proposition 1,  \citep{MourifieHenry}]
\label{prop:mh}
Suppose there exists a vector $Z$ such that it satisfies the exogeneity restriction
\begin{align}
\label{eq:y1y0}
(Y(1), Y(0)) \ci Z.
\end{align}
Then,  the sharp bounds on the joint distribution of potential outcomes are
 \begin{align}
\Pr (Y(1)=1, Y(0) =0) &\leq \min_{z \in \mathcal{Z}} \Pr (Y=1, D=1\mid Z=z), \label{eq:roy2} \\
\Pr (Y(1)=0, Y(0) =1) &\leq  \min_{z \in \mathcal{Z}} \Pr (Y=1, D=0\mid Z=z).  \label{eq:roy3}   
\end{align}
\end{proposition}

Proposition \ref{prop:mh} restates the Proposition 1 of \citep{MourifieHenry}. It gives sharp bounds in Roy model with an instrument $Z$ obeying exclusion restriction. Such variables are akin to typical instrumental variables. The examples of discrete-valued $Z$ provided in \citep{MourifieHenry} include parental education, distance to a college, and attendance a Catholic high school.

\begin{assumption}[Instruments and Covariates]
\label{ass:commsupp}
(A) Conditional independence. The discrete-valued instrument $Z$ is independent of the potential outcomes conditional on $X$
 \begin{align}
 \label{eq:condindep2}
(Y(1),Y(0)) \ci Z \mid X.
 \end{align}
 (B) Complete independence.  The discrete-valued instrument $Z$ is independent of the potential outcomes
 \begin{align}
 \label{eq:uncondindep}
(Y(1),Y(0),X) \ci Z.
 \end{align}
 \end{assumption}

Assumption \ref{ass:commsupp} (A) is a relaxation of \eqref{eq:y1y0} which requires that independence holds only conditional on covariates. Assumption \ref{ass:commsupp} (B) is equivalent to \eqref{eq:y1y0}. 

\begin{proposition}[Sharp bounds in Roy model with covariates]
\label{prop:roy2}
(A) Suppose Assumption \ref{ass:commsupp}(A) holds. Then, the sharp bounds on the joint distribution of potential outcomes are aggregated intersection bounds
 \begin{align}
\Pr (Y(1)=1, Y(0) =0) &\leq \E [\min_{z \in \mathcal{Z}} \Pr (Y=1, D=1\mid Z=z, X)]  \label{eq:roy22} \\
\Pr (Y(1)=0, Y(0) =1 ) &\leq \E [ \min_{z \in \mathcal{Z}} \Pr (Y=1, D=0\mid Z=z, X)] \label{eq:roy23}   
\end{align}
(B) Jensen's inequality implies 
\begin{align}
\E [\min_{z \in \mathcal{Z}} \Pr (Y=1, D=1\mid Z=z, X)] \leq \min_{z \in \mathcal{Z}} {\Pr}_{\mu} (Y=1, D=1\mid Z=z),  \label{eq:roy32}  \\
\E [ \min_{z \in \mathcal{Z}} \Pr (Y=1, D=0\mid Z=z, X)] \leq \min_{z \in \mathcal{Z}} {\Pr}_{\mu} (Y=1, D=0\mid Z=z),  \label{eq:roy33}
\end{align}
where 
\begin{align}
{\Pr}_{\mu} (Y=1, D=d\mid Z=z) := \dfrac{ \E \bigg[ \dfrac{ 1\{ D=d \} \cdot Y \cdot 1\{ Z=z \}}{ \mu_{z0}(X)} \bigg]}{ \Pr (Z=z)}, \quad z \in \mathcal{Z}.
\end{align}
(C)  Furthermore, if the propensity score is constant in $X$, 
$$
\mu_{z0}(X) = \Pr (Z=z), \quad z \in \mathcal{Z}, \quad \text{ a.s. },
$$
\eqref{eq:roy32}--\eqref{eq:roy33} reduce to regular bounds in Proposition \ref{prop:mh}
 \begin{align}
{\Pr}_{\mu} (Y=1, D=1\mid Z=z) =   \Pr (Y=1, D=d\mid Z=z), \quad \forall d \in \{0, 1\}.  \label{eq:roy34}   
\end{align}
\end{proposition}

Proposition  \ref{prop:roy2}  refines Proposition \ref{prop:mh} by incorporating covariate information. The sharp bounds  are derived by  applying the argument in Proposition \ref{prop:mh}, conditional on  $X$, and then aggregating over the covariate space.
Interchanging expectation and minimum gives another pair of bounds of the form \eqref{eq:roy2}--\eqref{eq:roy33}. Since they do not involve any expectation functions and are simpler to estimate, we refer to them as basic (or no-covariate) bounds. If the propensity score is constant, these basic bounds coincide with the original bounds defined in Proposition \ref{prop:mh}.

Let me demonstrate the proposed inferential methodology focusing on the first bound in \eqref{eq:roy22}. The bound 
\begin{align}
\label{eq:psi00}
\psi_0 =  \E [\min_{z \in \mathcal{Z}} \Pr (Y \cdot D =1\mid Z=z, X)]
\end{align}
is a special case of \eqref{eq:mainpsi} with $\mathcal{T} = \mathcal{Z}$, the nuisance vector-function
$$
\nu_0(x) = (\Pr (D =1,  Y=1 \mid Z=z, X=x))_{z \in \mathcal{Z}},
$$
and the projection functions
$$
\phi(z, v) = v_z \quad z \in \mathcal{Z}.
$$
Furthermore, it can be also mapped to the optimal welfare parameter $\psi_0$ in Example \ref{ex:welfare} with 
\begin{align}
\label{eq:mapping}
\mathcal{D}:=\mathcal{Z}, \quad   D:=Z, \quad Y:=(D\cdot Y).
\end{align}
Following Example \ref{ex:welfare}, define the orthogonal score for each $\phi(z,\nu_0(x))$ as
\begin{align*}
\rho(W, z, \xi) &=  \Pr (D =1,  Y=1 \mid X, Z=z) + \dfrac{ 1\{ Z=z\} }{\mu_z(X)} \left(D \cdot Y- \Pr (D =1,  Y=1 \mid X, Z=z) \right),
\end{align*}
where the true value $\xi_0$ of the nuisance parameter is
$$
\xi_0(x) = (\nu_0(x), \mu_{z0}(x)), \qquad \mu_{z0}(x) = \Pr (Z=z \mid X=x), \quad z \in \mathcal{Z}.
$$
Given the true functions $\nu_{z0}(\cdot), \mu_{z0}(\cdot)$ and  sequences of shrinking neighborhoods $\mathcal{T}_N^z$ of  $\nu_{z0}(\cdot) $ and  $M_N^z$  of $\mu_{z0}(x)$, define the following rates:
	\begin{align*}
	 \nu^{\infty}_N := \sup_{z \in \mathcal{Z}} \sup_{\nu_{z} \in \mathcal{T}^z_N} \sup_{x \in \mathcal{X}} | \nu_{z} (x) - \nu_{z0}(x) |, \\
	  \nu_N := \sup_{z \in \mathcal{Z}} \sup_{\nu_{z} \in \mathcal{T}^z_N} ( \E   ( \nu_{z} (X) - \nu_{z0} (X) )^2)^{1/2}, \\
	 \mu_N := \sup_{z \in \mathcal{Z}} \sup_{\mu_{z} \in \mathcal{M}_N^z} ( \E   ( \mu_{z} (X) - \mu_{z0} (X) )^2)^{1/2}.
	\end{align*}

Assumption \ref{ass:fsrate} states the regularity conditions. First, it requires the instrument $Z$ to be discrete-valued with finite support so that  the propensity score
\begin{align}
\label{eq:so}
\kappa \leq \mu_{z0} (x) \leq 1-\kappa, \quad \forall x \in \mathcal{X} \forall z \in \mathcal{X}
\end{align}
is bounded away from zero and one for each distinct instrument value.  Continuously supported instrument (i.e.,  $\mathcal{T}=\mathcal{Z}$) is outside of the scope of this paper since their propensity score violates \eqref{eq:so}. In this case, I conjecture that the propensity score  needs to be approximated by a kernel density estimator, e.g. as is standard in the results for continuous treatments   \citep{Colangelo}.  Second, the mean square rates of the first-stage estimators must decay sufficiently fast, a condition that is standard in semiparametric estimation literature.

 \begin{assumption}[Regularity Conditions for Roy Model with Covariates] 
\label{ass:fsrate} 
Assume that there exists a sequence of numbers $\epsilon_N = o(1)$ and sequences of neighborhoods $\mathcal{T}^z_N$ of  $\nu_{0z}(\cdot) $ and $M_N^z$ of $\mu_{0z}(x)$ such that both the true value $\xi_0(x)$ and the first-stage estimate $\{\widehat{\nu}_z(\cdot), \widehat{\mu}_z(\cdot) \} $ belong to the set $\{\mathcal{T}^z_N \bigtimes M_N^z\}$ w.p. at least $1-\epsilon_N$ for each $ z \in \mathcal{Z}$. The functions in  $M_N^z$ are bounded uniformly over their domain from above and below by $\kappa$ and $1-\kappa$. Finally, assume  that mean square rates $\nu_N,\mu_N$ decay sufficiently fast: $$N^{1/2} \nu_N\mu_N  = o(1), \quad \nu_N \vee\mu_N = o(1), \quad \nu_N^{\infty} = o(N^{-1/4}).$$ 
\end{assumption}
 
Corollary \ref{cor:roy} gives an envelope score estimator for  Roy model bounds and delivers uniformly valid debiased inference  in the presence of covariates. As described in \eqref{eq:mapping}, the sufficient conditions for the margin assumption discussed in Section \ref{sec:ma} are equally applicable to Roy model bounds.

\begin{corollary}[Asymptotic Theory for Roy Model with Covariates]
\label{cor:roy}
Suppose Assumptions \ref{ass:commsupp} (A) and \ref{ass:fsrate} hold, and Assumption \ref{ass:ma}  holds for $\nu_0(X) = (\Pr (D =1,  Y=1 \mid Z=z, X))_{z \in \mathcal{Z}}$.  Then, the statements of Theorems \ref{thm:closedform}--\ref{thm:boot} hold for the estimator of Definition \ref{def:envscore} with $\psi_0$ in \eqref{eq:psi00}.
\end{corollary}

\subsection{Horowitz-Manski-Lee bounds with discrete outcomes.}
\label{sec:lee}

I begin by introducing the sample selection problem.  Let $D=1$ be an indicator for treatment receipt.  Let $Y(1)$ and $Y(0)$ denote the  potential outcomes if an individual is treated or not, respectively.  Likewise, let $S(1)=1$ and $S(0)=1$ be  dummies for whether an individual's outcome is observed  with and without treatment, respectively.  The  data vector $W=(D,X,S,S \cdot Y)$  consists of the treatment status $D$, a baseline covariate vector $X$, the observed selection status $S=D \cdot S(1) + (1-D) \cdot S(0)$ and the observed outcome $S \cdot Y = S \cdot (D  \cdot Y(1) + (1-D) \cdot Y(0))$ for selected individuals.  \citep{LeeBound} focuses on the average treatment effect (ATE)
 \begin{align}         
 \label{eq:truebeta}
 \beta_0 = \E[Y(1) - Y(0)  \mid S(1)=1, S(0)=1] 
 \end{align} 
 for subjects who are selected into the sample regardless of treatment receipt---the \emph{always-takers}.   
 
 \begin{assumption}[Assumptions of \citep{LeeBound}]
 \label{ass:identification:treat}
The following statements hold. 
\begin{compactenum}[(1)]
\item[(1)] (Independence). The  vector $(Y(1),Y(0),S(1),S(0))$ is independent of $D$ conditional on $X$. The propensity score 
  \begin{align}
 \label{eq:propscore}
 \mu_{10}(X) := \Pr (D =1 \mid X), \qquad \mu_{00}(X) =1 - \mu_{10}(X) 
 \end{align} 
 is assumed known.
 	\item[(2)] (Monotonicity).  \begin{equation} \label{eq:monot1} S(1) \geq S(0) \quad \text{ a.s. }. \end{equation}
\end{compactenum}
 \end{assumption}
 
 Assumption \ref{ass:identification:treat}(1) holds by random assignment. Assumption \ref{ass:identification:treat}(2) states that all subjects must exhibit the same direction of selection response. It is frequently imposed in selection and treatment choice models. If covariates are available, this assumption has testable implications. Furthermore, it can be relaxed to conditional monotonicity \citep{Kolesar,SemSupp2}.  In this paper, we consider an unconditional version of the monotonicity assumption so as to focus on the theory for discrete-valued outcomes.  As discussed in \citep{LeeBound}, the average control outcome is point-identified
 $$ \E [ Y(0)  \mid  S(0)=1] = \E [ Y(0)  \mid S(1)=1, S(0)=1] =\E [ Y \mid S=1, D=0].$$
I focus on the average treated outcome 
 \begin{align}
 \label{eq:truebeta2}
 \beta_1 = \E[Y(1)  \mid S(1)=1, S(0)=1].
 \end{align} 
In contrast to the control group, a treated outcome can be either an always-taker's outcome or a complier's outcome. The always-takers' share among the treated outcomes is
\begin{align}
\label{eq:p0}
p_0 = \Pr[ S(1) =1, S(0)=1\mid S(1) =1 ]= \Pr[ S(0)=1\mid S(1) =1 ] =  \dfrac{\Pr [S=1\mid D=0]}{\Pr [S=1\mid D=1]} =: \dfrac{s_0}{s_1}.
\end{align}

\paragraph{Binary outcomes. } Suppose $Y(1)$ and $Y(0)$ take values in $\{1, 0\}$. In the best case, the always-takers comprise the top $p_0$-quantile of the treated outcomes. Let $p_Y:= \Pr (Y =1 \mid D=1, S=1)$.  In case when $p_0 <p_Y$, 
 the best-case  always-takers' outcome is equal to one for all always-takers. Otherwise, the best-case always-takers' outcome distribution is a mixture of ones and zeroes, with the mixing proportion of ones and zeroes  equal to $p_Y/p_0$ and $1-p_Y/p_0$, respectively.  In other words, the basic (i.e., no-covariate) upper bound $\bar{\beta}_U$ on $\E[ Y(1) \mid S(1) =1, S(0)=1]$ can be expressed as
\begin{align}
\label{eq:basicupper}
\bar{\beta}_U &= \min (p_Y/p_0 -1, 0)+1 = \min \left( \dfrac{s_1 p_Y -  s_0 }{s_0}, 0 \right) +1.
\end{align}
Since $\bar{\beta}_U$ involves no covariates, we refer to it as the basic bound as opposed to the sharp bound derived further.

Lee's identification strategy can be implemented conditional on covariates. Denote the conditional trimming threshold $p_0(x)$ as
 \begin{align}
  p_0(x) = \dfrac{\Pr (S=1\mid  D=0, X=x)}{\Pr (S=1\mid D=1, X=x)}=\dfrac{s_0(0,x)}{s_0(1,x)} \quad x \in \mathcal{X} \label{eq:condtrim}
 \end{align}
 and the conditional probability of outcome one as 
 $$
 p_Y(x) = \Pr (Y = 1\mid D=1, S=1, X=x). 
 $$
Finally, the conditional upper bound $\bar{\beta}_U(x)$ can be expressed as
   \begin{align}
   \label{eq:condbetau}
\bar{\beta}_U (x)= \min (p_Y(x)/p_0(x) -1, 0 ) +1.
\end{align}
Aggregating the conditional bound over the always-takers' covariate distribution gives the sharp upper bound
   \begin{align}
\beta_U &= \int_{x\in \mathcal{X}} \bar{\beta}_U (x) f_{X}(x \mid S(0)=1, S(1)=1) dx = \dfrac{\E [  \bar{\beta}_U (X) s_0(0,X) ] }{ \E [ s_0(0,X) ]} \label{eq:sharpbetau},
 \end{align}
 and a similar argument applies for the lower bound. Proposition \ref{prop:lee} derives the sharp lower and upper bounds for the average potential outcome.
 
 \begin{proposition}
\label{prop:lee}
 Suppose Assumption \ref{ass:identification:treat} holds for a binary outcome $Y$ taking values in $\{0, 1\}$. Then, the following statements hold:
 (A) The sharp lower and upper bounds on $\beta_1$ in \eqref{eq:truebeta2} take the form of ratios
\begin{align}
\label{eq:leebounds}
 \beta_L = \dfrac{N_L}{ \E [ s_0(0,X) ] }, \qquad  \beta_U = \dfrac{N_U}{  \E [ s_0(0,X) ]  },
 \end{align}
whose numerators are aggregated intersection bounds 
  \begin{align}
 N_L &=  \E [ \max (s_0(0,X) - s_0(1,X)(1-p_Y(X)), 0) ] , \label{eq:nlbinary} \\
  N_U &= \E  [\min (s_0(1,X) p_Y(X)-s_0(0,X), 0) +s_0(0,X) ] . \label{eq:nubinary} 
 \end{align}
 (B)  Jensen's inequality implies
  \begin{align}
\E [ \max (s_0(0,X) - s_0(1,X)(1-p_Y(X)), 0) ] &\geq \max (s_0 -  s_1 (1-p_Y) , 0),  \label{eq:nlbinary2} \\
\E  [\min (s_0(1,X) p_Y(X)-s_0(0,X), 0)  ] &\leq \min ( p_Y s_1 -  s_0, 0). \label{eq:nubinary2} 
 \end{align}

\end{proposition}

Proposition \ref{prop:lee} derives   basic and sharp bounds on the average potential outcome in a selection problem. The denominators of basic and sharp bounds are the same and equal 
the always-takers' share $$s_0 = \Pr [S=1 \mid D=0] = \E [ s_0(0,X) ].$$ Their numerators are regular and aggregated intersection bounds, described in the LHS and RHS of \eqref{eq:nlbinary2}--\eqref{eq:nubinary2}, respectively.  The basic bounds \eqref{eq:nlbinary2}--\eqref{eq:nubinary2} coincide (up to a constant) the bounds in the Lemma 1 of concurrent, independent work of \citep{kroft2024leeboundsmultilayeredsample}.

\paragraph{Discrete outcomes. }In this section, I allow the outcome $Y$ to take a finite number of discrete values. Proposition \ref{prop:leed} characterizes the numerators of Horowitz-Manski-Lee bounds as a special case of aggregated intersection bounds.

\begin{proposition}[Horowitz-Manski-Lee Bounds with Discrete Outcomes]
\label{prop:leed}
 Suppose Assumption \ref{ass:identification:treat}  holds with a discrete outcome $Y$ whose support is denoted by $\mathrm{T}$.   (A) Then, the sharp lower and upper bound on $\beta_1$ are given in \eqref{eq:leebounds} with $N_L$ and $N_U$ given in 
   \begin{align}
  \label{eq:nl}
 N_L &= \E [\max_{\beta \in \mathrm{T} } (\beta s_0(0,X) + s_0(1,X) \E [ \min (Y-\beta,0)\mid D=1, S=1, X])]
 \end{align}
 and the upper bound numerator
   \begin{align}
   \label{eq:nu}
N_U &=  \E [ \min_{\beta \in \mathrm{T} } ( \beta s_0(0,X) +  s_0(1,X) \E[ \max (Y-\beta,0)\ \mid D=1, S=1, X])].
\end{align}
(B) The basic bounds on $\beta_1$ are given in \eqref{eq:leebounds} with $N_L$ and $N_U$ given in 
   \begin{align*}
 N_L &= \max_{\beta \in \mathrm{T} } (\beta s_0 + s_1 \E [ \min (Y-\beta,0)\mid D=1, S=1])
 \end{align*}
 and the upper bound numerator
   \begin{align*}
N_U &=  \min_{\beta \in \mathrm{T} } ( \beta s_0 +  s_1 \E[ \max (Y-\beta,0) \mid D=1, S=1]).
\end{align*}

 \end{proposition}
 
 Proposition \ref{prop:leed} develops a novel representation of Horowitz-Manski-Lee bounds as regular and aggregated intersection bounds, respectively. The outcome distribution is represented using point mass functions (PMFs) rather than quantiles, which is convenient for working with discrete outcomes. As shown in \citep{RockUryasev}, the minimum in \eqref{eq:nu} is attained by the outcome quantile of level $1-s_0/s_1$ or the ``borderline'' always-takers' outcome.

\paragraph{Debiased inference. } To describe an inferential approach,  I derive moment functions for \eqref{eq:nl} and \eqref{eq:nu}. Define the moment functions for the lower bound
   \begin{align}
     \label{eq:glhelp_beta}
\rho_L (W, \beta):&=   \dfrac{DS(Y- \beta )}{\mu_{11}(X) }   \cdot 1{\{ Y \leq \beta     \}} +\beta  \dfrac{(1-D) S}{\mu_{00}(X) },
\end{align}
and for the upper bound
   \begin{align}
     \label{eq:guhelp_beta}
\rho_U (W, \beta):&=   \dfrac{DS(Y- \beta )}{\mu_{11}(X) }   \cdot 1{\{ Y \geq \beta     \}} +\beta  \dfrac{(1-D) S}{\mu_{00}(X) }.
\end{align}
Next, let $ \mathrm{T}$ be the finite support of the outcome $Y$.  Define the nuisance parameter 
\begin{align*}
\xi_0(x)= \{ s_0(0,x), s_0(1,x),{\pi_{\beta 0}(x)}_{\beta  \in  \mathrm{T} } \}.
\end{align*}
The resulting moment function for $N_U$ is 
\begin{align}
g_U(W, \xi) = \sum_{\beta \in \mathrm{T} } \rho_U (W, \beta) 1\{ \beta = \arg \min_{\beta \in \mathrm{T} } \beta s(0,X) + s(1,X) \E [ \max (Y -\beta, 0) \mid D=1, S=1, X] \}.
\end{align}
 Likewise, the moment function for  $N_L$ is 
\begin{align}
g_L(W, \xi) = \sum_{\beta \in \mathrm{T} } \rho_L (W, \beta) 1\{ \beta = \arg \max_{\beta \in \mathrm{T} } \beta s(0,X) + s(1,X) \E [ \min (Y -\beta, 0) \mid D=1, S=1, X] \}.
\end{align}
Given the true functions $\xi_{0}(\cdot)$ and  sequences of shrinking neighborhoods $S^d_N$ of  $s_0(d,x)$ and $\mathcal{P}^{\beta}_N$ of $\pi_{\beta 0}(x)=\Pr (Y = \beta \mid D=1, S=1, X=x)$,  define the following rates:
	\begin{align*}
	 s^{\infty}_N :&= \sup_{d \in \{1, 0\}} \sup_{s(d, \cdot) \in \mathcal{S}^d_N} \sup_{x \in \mathcal{X}} | s(d,x) -s_0(d,x) |, \\
		   \pi^{\infty}_N :&= \sup_{\beta \in \mathrm{T}} \sup_{\pi_{\beta} \in \mathcal{P}^{\beta}_N} \sup_{x \in \mathcal{X}} | \pi_{\beta} (x) -\pi_{\beta0} (x)  |  
	\end{align*}
	
 \begin{assumption}[Regularity Conditions for Horowitz-Manski-Lee Bounds] 
\label{ass:fsrate2} 
Assume that there exists a sequence of numbers $\epsilon_N = o(1)$ and sequences of neighborhoods $\mathcal{P}^{\beta}_N$ of  $\pi_{\beta0} (\cdot) $ and $\mathcal{S}^d_N$ of $s_0(d,\cdot)$ such that both the true value $\xi_0(x)$ and the first-stage estimate $\{ \widehat{s}(d,\cdot), \widehat{\pi} (\beta, \cdot) \} $ belongs to the set $\{ \mathcal{S}^d_N \bigtimes  \mathcal{P}^{\beta}_N \}$ w.p. at least $1-\epsilon_N$ for each $ \beta \in \mathrm{T}$ and $d \in \{1, 0\}$. 
(i) The rates $  \pi^{\infty}_N$ and $ s^{\infty}_N$ are sufficiently fast: $$  \pi^{\infty}_N + s^{\infty}_N = o(N^{-1/4}).$$ 
(ii) The functions in each set, as well as the propensity score $\mu_{10}(x)$ and  $\pi_{\beta0}(x)$,  are bounded uniformly over their domain from above and below by $\kappa$ and $1-\kappa$. The support set $\mathrm{T}$ is finite.   
(iii) The vector $(s_0(0,X), s_0(1,X),  \cup_{  \mathrm{T}  } \pi_{\beta0}(X) )$ is continuously distributed with a bounded joint density such that each of its component is supported on $(\kappa, 1- \kappa)$. 
   \end{assumption}

Assumption \ref{ass:fsrate2} summarizes regularity conditions for Horowitz-Manski-Lee bounds. Since the propensity score is assumed known, the  individual moment functions $\{ \rho_L (W, \beta), \rho_U (W, \beta)\}_{\{\beta \in \mathcal{T} \}}$ in \eqref{eq:glhelp_beta} and \eqref{eq:guhelp_beta} do not involve any nuisance parameters. As a result, Assumption \ref{ass:smallbias} is automatically satisfied for the individual functions.

\begin{corollary}[Asymptotic Theory for Horowitz-Manski-Lee Bounds with Discrete Outcome]
\label{prop:leed2}
Suppose Assumptions \ref{ass:identification:treat}  and \ref{ass:fsrate2}  hold.  Then, the statements of Theorems \ref{thm:closedform}--\ref{thm:boot} hold for the estimator  described in Algorithm \ref{alg:lee} as well as its bootstrap analog outlined in Definition \ref{def:bb}.  
\end{corollary}

Corollary \ref{prop:leed2} delivers a root-$N$ consistent, asymptotically Gaussian estimator of sharp Horowitz-Manski-Lee bounds assuming the conditional probability of selection and the conditional PMF are estimated at a sufficiently fast rate. To the best of my knowledge, this is a first example of debiased inference for the trimming  bounds with discrete-valued outcome.

In conclusion, I state the Algorithm \ref{alg:lee} for computing the bounds as well as  examples of the first-stage estimators.

\begin{example}[Estimator of Selection Probabilities]
\label{ex:sel}
Suppose the selection probability $s_0(d,x)$ for $d \in \{1, 0\}$ can be approximated by a logistic function
\begin{align}
\label{eq:sel1}
s_0(d,x) = \Lambda ( x' \gamma^d_{0}) + r_d(x), \quad d \in \{1, 0\},
\end{align}
where $\Lambda(\cdot) = \dfrac{\exp (\cdot)}{1 + \exp (\cdot)}$ is the logistic CDF,  $\gamma^d_{0} \in \mathrm{R}^{p}$ is the pseudo-true value of the logistic parameter,  and $r_d(x)$ is its approximation error. The logistic likelihood function is 
\begin{align}
\label{eq:ll}
\ell_d(\gamma^d) =\dfrac{1}{N}  \sum_{i=1}^N (D_i = d ) \bigg(  \log ( 1+ \exp (X_i '\gamma^d))  - S_i X_i'\gamma^d \bigg), \quad d \in \{1, 0\}. 
\end{align}
Given an estimate $\widehat{\gamma}^d$ of $\gamma^d$, define  the estimated selection probabilities as
\begin{align}
\label{eq:sel2}
\widehat{s}(d,x) &=  \Lambda ( x' \widehat {\gamma}^d), \quad d \in \{1, 0 \}
\end{align}
and the estimated CATE on selection
\begin{align*}
\widehat{\tau}(x) =  \widehat{s}(1,x) - \widehat{s}(0,x).
\end{align*}
Given the penalty parameter $\lambda_S$,   the $\ell_1$-regularized logistic estimator of $\gamma^d$ \citep{orthogStructural,Program}  is
\begin{align}
\label{eq:onepenalty}
\widehat{\gamma}^d_{L}= \arg \max_{\gamma^d \in \mathrm{R}^{p}} \ell_d(\gamma^d) + \lambda_S  \| \gamma^d \|_1.
\end{align}

\end{example}

\begin{example}[Estimator of Outcome Probability Mass Function]
\label{ex:outcome}
Suppose the outcome PMF can be approximated by a multinomial logistic regression
\begin{align}
\label{eq:sel1}
\pi_{\beta0}(x) = \Lambda_{\beta}( x' \delta_{\beta 0}) + r_{\beta}(x), \quad \beta \in \mathrm{T},
\end{align}
where $\delta_{\beta 0} \in \mathrm{R}^{p}$ is the pseudo-true value of the logistic parameter and
$$
\Lambda_{\beta}(x' \delta_{\beta 0}) = \frac{\exp(x' \delta_{\beta 0})}{1+\sum_{\beta \in \mathrm{T} \setminus \{0 \}}  \exp(x' \delta_{\beta 0})} ,
$$
 and $r_{\beta}(x)$ is the approximation error. The multiclass classification via sparse multinomial logistic regression is developed in \cite{abramovich2020multiclassclassificationsparsemultinomial}.
 \end{example}

 \begin{algorithm}
 Input: estimated first-stage fitted values $(\widehat\xi_i=\widehat \xi (X_i))_{i=1}^N =(\widehat s(0,X_i), \widehat s(1,X_i), (\widehat \pi_\beta(X_i))_{\beta\in\mathrm{T}})_{i=1}^N$. \\Then, Estimate 
\begin{algorithmic}[1]

\STATE The numerators $N_U$ and $N_L$ 
$$\widehat N_U:=N^{-1} \sum_{i=1}^N g_U(W_i, \widehat{\xi}_i) , \qquad \widehat N_L:= N^{-1} \sum_{i=1}^N g_L(W_i, \widehat{\xi}_i).$$

\STATE The denominator  (the always-takers' share) $$\widehat \pi_{\text{AT}}:=N^{-1} \sum_{i=1}^N g_0(W_i, \widehat{\xi}_i), \text{ where }g_0(W,\xi) = \dfrac{1-D}{\mu_{00}(X)}[S-s(0,X)]+s(0,X)$$

\STATE The preliminary bounds 
\begin{align}
\label{eq:ratio}
\widehat{\beta}_L:=\dfrac{ \widehat N_L }{\widehat \pi_{\text{AT}}},  \quad \widehat{\beta}_U:= \dfrac{\widehat N_U  }{\widehat \pi_{\text{AT}}}
\end{align}
and the  sorted bounds
\begin{align}
\label{eq:genest}
\widetilde{\beta}_L:= \min (\widehat{\beta}_L, \widehat{\beta}_U), \quad \widetilde{\beta}_U:= \max (\widehat{\beta}_L, \widehat{\beta}_U).
\end{align}

\STATE  The $100(1-\alpha)\%$ confidence region is 
\begin{align}
\label{eq:cralpha}
CR^{1-\alpha}  := [ \widehat \beta_L - N^{-1/2} \widehat \Omega^{1/2}_{LL} c_{1-\alpha/2}, \quad \widehat \beta_U +  N^{-1/2} \widehat \Omega^{1/2}_{UU} c_{1-\alpha/2}] 
\end{align}
where the asymptotic covariance matrix $\Omega $ is
\begin{align}
 \Omega = Q \Gamma Q^{T}, \quad Q = ( \pi_{\text{AT}} )^{-1} \begin{pmatrix} 1 & 0 & -\beta_L \\
0 & 1 & -\beta_U  \\
\end{pmatrix} \label{eq:omega}
\end{align}   
and $\Gamma = \text{Var} (g_L(W, \xi_0), g_U(W, \xi_0), g_0(W, \xi_0))$. 

\end{algorithmic}

\caption{ Horowitz-Manski-Lee Bounds}
\label{alg:lee}

\end{algorithm}

\section{Numerical Results}
\label{sec:empirical2}

This section provides numerical evidence for the methods developed in this article. Section \ref{sec:mc} offers a Monte Carlo experiment constructed in the context of Example \ref{ex:welfare}. Section \ref{sec:empirical} offers an empirical illustration of the method for Horowitz-Manski-Lee bounds in Section \ref{sec:lee}.

\subsection{Simulation Study}
\label{sec:mc}

I build a simulation exercise on JTPA dataset \citep{Bloom1997} that consists of three elements: baseline covariates, treatment (access to job training), and outcome.  The baseline covariate $X_1$ is taken to be the previous earnings  \textit{PreEarn} measured in $10, 000$ USD.  The covariate vector $X=(X_1, \dots, X_1^p)$ includes the first $p$ powers of the  \textit{PreEarn} variable. The treatment $D$ is determined by a coin flip with probability $Pr(D = 1) = \frac{2}{3}$, to match the propensity score in JTPA data.  The outcome $Y$ follows a linear model
\begin{align}
\label{eq:sims}
Y = X' \kappa_0 + (DX)' \gamma_0 + \epsilon,
\end{align}
where $\epsilon \sim N(0, \sigma^2)$ is a Gaussian shock independent of the data. The true parameter values are 
$$\kappa_0 = \gamma_0 =(2^{-1}, 2^{-2}, \dots, 2^{-p}), \quad \sigma^2 = 1.$$
The population data set size is $9, 223$. In addition to this primary design, we also consider an artificial (Gaussian) design where  $X_1$  is drawn from a Gaussian distribution whose mean and variance matches the respective parameters of actual  \textit{PreEarn} variable, with all other steps being the same.   Using this setup, we evaluate coverage of plug-in and bootstrap inferential confidence intervals based on the doubly robust estimator described in Example \ref{ex:welfare}. The first-stage functions $m(0, X)$ and $m(1, X)$ are estimated via linear least squares. The performance metrics include bias, mean squared error (MSE), and coverage rates of the confidence intervals (CIs). These metrics are analyzed across varying sample sizes $N \in \{100, 200, 300, 500\}$ and polynomial degrees $p \in \{1, 3, 5, 7\}$.

Tables \ref{tab:welfare} and \ref{tab:welfare2} summarize the simulation results, highlighting key differences across polynomial degrees ($p$) and designs. For lower degrees ($p \in \{1, 3\}$), both designs yield estimators with low bias, low MSE, and near-nominal coverage rates for both plug-in and bootstrap CIs. The results are consistent with the theoretical results in Theorems \ref{thm:closedform} and \ref{thm:boot}.   For higher degrees ($p \in \{5, 7\}$), the performance diverges significantly between the two designs. In the Gaussian design (Table \ref{tab:welfare2}), coverage remains close to the nominal rate even for $p=7$ when $N=500$. Conversely, in the primary design (Table \ref{tab:welfare}), coverage drops sharply to $48\%$ (Plug-In CI) and $39\%$ (Bootstrap CI) for $p=7$. This discrepancy likely arises from the heavy-tailed nature of \textit{PreEarn}, which could either affect the quality of the first-stage estimates of $\kappa_0$ and $\gamma_0$, 
make Assumption \ref{ass:ma} to be a poor fit for the data, or both.

 %
% 
%
%
%
%
%
%Tables \ref{tab:welfare} and \ref{tab:welfare2} summarize the results. For lower degrees of power $p \in \{1, 3 \}$, the estimators exhibit low bias, low MSE, and near-nominal coverage rates for both plug-in and bootstrap CIs, for both designs. The bootstrap coverage is only slightly further from the nominal one than the plug-in one, demonstrating the results in Theorem \ref{thm:closedform} and \ref{thm:boot}.  For higher degrees, $p \in \{5, 7 \}$, Tables \ref{tab:welfare} and \ref{tab:welfare2} exhibit drastic differences. For the Gaussian design in Table  \ref{tab:welfare2}, the coverage remains close to the nominal one (Columns (7)-(8)) for $N=500$. In the primary design in Table \ref{tab:welfare}, the coverage drops to $48\%$ (Plug-In CI) and $39\%$ (Bootstrap CI), respectively. The coverage failure likely arises from either \textit{PreEarn}'s large right tail affecting the first-stage estimates of $\kappa_0, \gamma_0$ or making Assumption \ref{ass:ma} worse matching the data, or both. 

\begin{table}[ht]
\centering
\caption{Finite-Sample Performance of Plug-In and Bootstrap CI}
\label{tab:welfare}
\begin{tabular}{ccccccccc}
 \hline \\
& & & \multicolumn{2}{c}{Coverage} & & & \multicolumn{2}{c}{Coverage} \\
\\
$N$ & Bias & MSE & Plug-In & Boot & Bias & MSE & Plug-In & Boot \\ 
\hline \\
& \multicolumn{4}{c}{p = 1} & \multicolumn{4}{c}{p = 3} \\ 
\hline \\
100 & 0.06 & 0.14 & 0.94 & 0.93 & 0.19 & 0.57 & 0.94 & 0.89 \\
200 & 0.04 & 0.1 & 0.93 & 0.94 & 0.07 & 0.21 & 0.96 & 0.92 \\
300 & 0.04 & 0.08 & 0.94 & 0.93 & 0.05 & 0.14 & 0.96 & 0.93 \\
500 & 0.03 & 0.07 & 0.92 & 0.92 & 0.04 & 0.1 & 0.94 & 0.91 \\
\hline \\
& \multicolumn{4}{c}{p = 5} & \multicolumn{4}{c}{p = 7} \\
\hline \\
100 & 17.81 & 116.47 & 0.92 & 0.73 & 1438.41 & 15608.04 & 0.56 & 0.33 \\
200 & 3.1 & 20.47 & 0.85 & 0.74 & 193.21 & 1830.28 & 0.49 & 0.38 \\
300 & 1.15 & 6.1 & 0.79 & 0.73 & 143.1 & 1470.73 & 0.44 & 0.33 \\
500 & 0.51 & 3 & 0.76 & 0.72 & 30.68 & 257.34 & 0.48 & 0.39 \\
\hline \\
\end{tabular}
\caption*{Notes.  Results are based on $1, 000$ simulation runs.  Bias is the difference between the true parameter value and the estimate constructed in Definition \ref{def:envscore} for Example \ref{ex:welfare}, averaged across simulation runs. MSE is mean squared error. Coverage Rate is the fraction of times a two-sided symmetric CI  with critical values $c_{\alpha/2}$ and $c_{1-\alpha/2}$ covers the true parameter, where $\alpha=0.95$ is the nominal coverage. A plug-in CI is given in \eqref{eq:plugin} and a weighted bootstrap CI is given in \eqref{eq:boot}, respectively, where $B=1000$ is the number of bootstrap repetitions.  $N$ is the sample size in each simulation run. }
\end{table}

\begin{table}[ht]
\centering
\caption{Finite-Sample Performance of Plug-In and Bootstrap CI (Gaussian Design)}
\label{tab:welfare2}
\begin{tabular}{ccccccccc}
 \hline \\
& & & \multicolumn{2}{c}{Coverage} & & & \multicolumn{2}{c}{Coverage} \\
\\
$N$ & Bias & MSE & Plug-In & Boot & Bias & MSE & Plug-In & Boot \\ 
\hline \\
& \multicolumn{4}{c}{p = 1} & \multicolumn{4}{c}{p = 3} \\ 
\hline \\
100 & 0.04 & 0.15 & 0.92 & 0.92 & 0.08 & 0.17 & 0.91 & 0.9 \\
200 & 0.02 & 0.1 & 0.95 & 0.95 & 0.03 & 0.11 & 0.94 & 0.94 \\
300 & 0.01 & 0.08 & 0.95 & 0.95 & 0.02 & 0.09 & 0.94 & 0.94 \\
500 & 0.01 & 0.06 & 0.95 & 0.94 & 0.01 & 0.07 & 0.94 & 0.94 \\
\hline 
& \multicolumn{4}{c}{p = 5} & \multicolumn{4}{c}{p = 7} \\
\hline \\
100 & 0.23 & 0.48 & 0.87 & 0.78 & 1.95 & 7.92 & 0.86 & 0.57 \\
200 & 0.07 & 0.15 & 0.92 & 0.9 & 0.32 & 1.11 & 0.9 & 0.78 \\
300 & 0.04 & 0.1 & 0.93 & 0.92 & 0.13 & 0.33 & 0.91 & 0.84 \\
500 & 0.02 & 0.07 & 0.94 & 0.94 & 0.05 & 0.13 & 0.94 & 0.91  \\
\hline \\
\end{tabular}
\caption*{Notes.  Results are based on $1, 000$ simulation runs.  Bias is the difference between the true parameter value and the estimate constructed in Definition \ref{def:envscore} for Example \ref{ex:welfare}, averaged across simulation runs. MSE is mean squared error. Coverage Rate is the fraction of times a two-sided symmetric CI  with critical values $c_{\alpha/2}$ and $c_{1-\alpha/2}$ covers the true parameter, where $\alpha=0.95$ is the nominal coverage. A plug-in CI is given in \eqref{eq:plugin} and a weighted bootstrap CI is given in \eqref{eq:boot}, respectively, where $B=1000$ is the number of bootstrap repetitions.  $N$ is the sample size in each simulation run. }
\end{table}

\subsection{Empirical application}
\label{sec:empirical}

To illustrate the immediate applicability of the proposed method, this study analyzes the effect of Medicaid exposure on healthcare utilization and health outcomes using data from the Oregon Health Insurance Experiment \citep{finkelstein}. In 2008, Oregon implemented a limited expansion of its Medicaid program, providing insurance coverage to low-income, uninsured adults selected through a lottery system from a waiting list. One year after randomization, a subset of $N = 58,405$ applicants was mailed a survey to assess changes in healthcare utilization and general well-being, with a response rate of approximately $50\%$. Abstracting from potential non-response bias, the study found that Medicaid significantly improved healthcare access, financial security, and mental health outcomes for low-income adults.\footnote{\citep{finkelstein} reported that the ability to reject the null hypothesis of no effect of health insurance on healthcare utilization or financial strain is generally robust to Lee bounds, while the ability to reject the null hypothesis of no effect on self-reported health outcomes is not robust (see footnote 19).} To examine the robustness of these findings, this section reports various versions of Horowitz-Manski-Lee bounds under various assumptions about subjects' response behavior.

This section focuses on the average treatment effect (ATE) of Medicaid exposure on self-reported mental health. Let $S(1) = 1$ and $S(0) = 1$ be binary indicators denoting whether a subject completes a survey when treated or not treated, respectively, and let $Y(1)$ and $Y(0)$ represent the corresponding potential outcomes. The observed data, $W = (X, D, S, S \cdot Y)$, include $D$ (lottery outcome), $S$ (an indicator of non-missing response), $Y$ (the response itself),  and baseline covariates $X$. The propensity score, $\mu(X)$, is determined by household size and survey wave fixed effects.\footnote{If an applicant wins the lottery, all members of their household become eligible to enroll. Consequently, larger households have a higher probability of winning the lottery compared to smaller ones. Additionally, control applicants were oversampled in earlier survey waves, further influencing the propensity score.} Other components of $X$ include 64 predetermined characteristics, such as demographics, enrollment in the Supplemental Nutrition Assistance Program (SNAP) or Temporary Assistance for Needy Families (TANF) as well as the total amount of benefits received in each program, and pre-existing health conditions.

Table \ref{tab:JC} summarizes the findings. The baseline estimate of Medicaid's effect on mental health is $2.27\%$. Interestingly, the control group's response rate ($49.4\%$) exceeds that of the treated group ($48.2\%$). Unconditional monotonicity (Assumption \ref{ass:identification:treat}), which posits that treatment discourages survey completion, lacks an intuitive explanation. Standard Lee bounds (Column (1)) are misleadingly tight.  Without monotonicity, the proportion of ``always-takers'' (respondents regardless of treatment status) cannot be bounded away from zero, and no-monotonicity bounds (Column (5)) do not provide any meaningful restriction on the treatment effect\begin{footnote}{\citep{ZhangRubin} bounds reported in Column (5) are driven by $Y(1), Y(0) \in \{1,0\}$.}\end{footnote}. This pattern holds for all survey outcomes, including questions about healthcare  utilization, financial strain, and self-reported health outcomes.

To make progress, the analysis focuses on a subset of the population for whom the direction of selection response aligns with the majority. Specifically, the parameter of interest is: \begin{align} \label{eq:bb0} \beta_0 = \E [ Y(1) - Y(0) \mid S(1) = S(0), s_0(0, X) \geq s_0(1, X)] \end{align} where the outcome of interest is mental health as measured by a positive response to the question “Did you not screen positive for depression in the last two weeks?”). The selection equation \eqref{eq:sel1} is estimated using logistic regression (see Example \ref{ex:sel}). The estimated share of subjects with negative selection response is $82\%$ (Column (2)), consistent with the negative direction of unconditional response effect. A higher likelihood of survey response is associated with being female, requesting English-language materials, and not receiving TANF benefits. Additionally, Medicaid's effect on response appears negative for individuals who experienced injuries or received SNAP benefits prior to randomization, suggesting that control participants' response could be driven by  acute health or financial challenges. 

Basic bounds on \eqref{eq:bb0} (Table \ref{tab:JC}, Column (3)), derived under conditional monotonicity,  cannot determine the direction of the treatment effect. Sharp bounds (Table \ref{tab:JC}, Column (4)), constructed using Algorithm \ref{alg:lee} with first-stage outcome fitted values as described in Examples \ref{ex:sel} and \ref{ex:outcome}, suggest that the Medicaid exposure effect is positive, though the magnitude is attenuated at the lower bound. The proposed approach relies on a smoothness assumption, specifically that the conditional selection and outcome probabilities are sufficiently continuously distributed, which could be plausible  since  some components of $X$ are continuously distributed, such as total amount of SNAP or TANF benefits or pre-existing ED charges.

\begin{table}
   \captionof{table}{Bounds on the Medicaid effect on self-reported mental health }
   \label{tab:JC}
   \centering % Center table

   \begin{tabular}{lcccccc}
        \toprule
     &  \shortstack{Unconditional \\ Monotonicity}  &      \multicolumn{3}{c}{Conditional  Monotonicity} & No Monotonicity  \\ 
    \\
   & (1) & (2) & (3) & (4) & (5) \\
     & & \shortstack{$\Pr (\tau(X) \geq 0)$} & Basic & Sharp & Basic \\ 
    \midrule
 
 Bounds       &           [0.014,             0.039]              &  0.827   &         [-0.003,  0.053] &  [0.002, 0.054] & [-1, 1]  \\
        95\% CR  &    (-0.004, 0.053) &  & (-0.026,   0.070)  &  (-0.014,   0.072)  & (-1, 1)    \\
        \bottomrule
    \end{tabular}

   \caption*{
   Notes. 
    Table shows estimated  bounds  in square brackets and  the $95 \%$ confidence region for the identified set  in parentheses. Column (1): basic bounds  on regular ATE  $\E[ Y(1) - Y(0) \mid S(1) = S(0)=1]$ under Assumption \ref{ass:identification:treat} with $ S(0) \geq S(1) \text{ a.s.}$. Columns (2)--(4)  report the results assuming conditional monotonicity. Column (2): estimated share of subjects with non-positive selection response. Column (3): basic bounds on the parameter \eqref{eq:bb0}. Column (4): bounds in  Algorithm \ref{alg:lee} where selection equation estimated in Example \ref{ex:sel} and the outcome equation as in Example \ref{ex:outcome}. Column (5): basic \citep{ZhangRubin} no-monotonicity bounds on the parameter \eqref{eq:bb0}.     Computations use design weights. The sample size $N = 58, 405$.    The asymptotic probability of the 95\% Confidence Region is based on $B = 500$ bootstrap repetitions.  }
\end{table}

\appendix 
\newpage

\renewcommand{\theequation}{A.\arabic{equation}}
\renewcommand{\thelemma}{A.\arabic{lemma}}
\renewcommand{\theassumption}{A.\arabic{assumption}}
\renewcommand{\theremark}{A.\arabic{remark}}
\renewcommand{\thecorollary}{A.\arabic{corollary}}
\renewcommand{\thetable}{A.\arabic{table}}
\renewcommand{\thesection}{A} 

\setcounter{remark}{0}

\setcounter{assumption}{0}
\section{Proofs}
\label{sec:proofs}

\paragraph{Notation.}  I use the empirical process notation. For a generic function $f$ and a generic sample $(W_i)_{i=1}^N$, denote the empirical sample average by $$ \EN f(W_i) := \dfrac{1}{N}  \sum_{i=1}^N f(W_i) $$ and the
scaled, demeaned sample average by  $$\GN f(W_i) := 1/\sqrt{N} \sum_{i=1}^N  [f(W_i) - \int f(w) d P(w)].$$ 
For two sequences of random variables  $\{ a_N, b_N, N \geq 1\}: a_N \lesssim_{P}  b_N$ means    $a_N = O_{P} (b_N)$. For two sequences of numbers $\{a_N, b_N, N \geq 1\}$, $a_N \lesssim  b_N$ means $a_N = O (b_N)$. Let $a \wedge b = \min \{ a, b\}, a \vee b = \max \{ a, b\} $. The $\ell_2$ norm of a vector is denoted by $\| \cdot \|$, the $\ell_1$ norm is denoted by $\| \cdot \|_1$, the $\ell_{\infty}$ norm is denoted by $\| \cdot \|_{\infty}$, and $\ell_0$ norm is denoted by $\| \cdot \|_{0}$. The notation $\| f \|_{F,2}$ stands for $(\int_{\mathcal{W}} f^2(w)  F(dw))^{1/2} = \| f \|_{F,2}$.

\paragraph{Auxiliary Statements.} Let $\Zeta$ be a random variable whose CDF and quantile function are denoted by $F_{\Zeta}$ and $Q_{\Zeta}$, respectively. Given a quantile level $\omega \in (0,1)$,  define lower- and upper-truncated random variables
\begin{align}
F^{L}_{\Zeta}(t) = \begin{cases}
\dfrac{F_{\Zeta}(t) - \omega}{\omega}, &\quad t \leq Q_{\Zeta}(\omega), \\
0, &\quad t > Q_{\Zeta}(\omega) \\
\end{cases}, \quad F^{U}_{\Zeta}(t) = \begin{cases}
0, &\quad  t< Q_{\Zeta}(1-\omega) \\
\dfrac{F_{\Zeta}(t) - \omega}{\omega}, &\quad t \geq Q_{\Zeta}(1-\omega).
\end{cases}
\end{align}
Define the lower- and upper- conditional value-at-risk \citep{RockUryasev} as
\begin{align}
CVAR^{L}_{\Zeta} (\omega) =\int_{-\infty}^{\infty} t d F^{L}_{\Zeta}(t), \qquad CVAR^{U}_{\Zeta} (\omega) =\int_{-\infty}^{\infty} t d F^{U}_{\Zeta}(t).
\end{align}
If $\Zeta$ has a continuous distribution without point masses,   $$F_{\Zeta} (Q_{\Zeta}(\omega)) = \omega, \quad CVAR^{L}_{\Zeta} (\omega) =   \E[ Z \mid Z \leq Q_Z(\omega)], \quad CVAR^{U}_{\Zeta} (1-\omega) = \E[ Z \mid Z \geq Q_Z(1-\omega)].$$ Finally,  $CVAR^L_{\Zeta} (\omega)$ and $CVAR^U_{\Zeta} (\omega)$ can be equivalently expressed as intersection bounds 
\begin{align}
&CVAR^{L}_{\Zeta} (\omega)  = \sup_{\beta \in \mathrm{R}}  \left( \beta + \omega^{-1}   \E_{\Zeta}[ \min (Z-\beta,0)]  \right)  \label{eq:cvarl} \\
&CVAR^{U}_{\Zeta} (\omega)  = \inf_{\beta \in \mathrm{R}}  \left( \beta + \omega^{-1}   \E_{\Zeta}[ \max (Z-\beta,0)]  \right). \label{eq:cvaru}
\end{align}
In case of \eqref{eq:cvarl}, the sup is attained by the value of $\beta$ that equals $\omega$-quantile
$$
Q_Z(\omega)= \inf \{\beta: F_Z(\beta) \geq \omega \}.
$$

\begin{lemma}[\citet{LuedtkeLaan}, Theorem 1]
\label{lem:LvdL}
Suppose the outcome $Y$ is a.s. bounded by $M$. Then, the first-best welfare  $\E [\max (m(1,X), m(0,X))]$ is pathwise differentiable if and only if 
\begin{align}
\label{ass:unique}
P_X (m(1,X) = m(0,X))=0
\end{align}
holds. The efficient score is
\begin{align}
\label{eq:efficient}
g(W, \xi_0) &= (\rho(W, 1, \xi_0) - \rho(W, 0, \xi_0))  1\{ m(1,X) - m(0,X)>0 \} +  \rho(W, 0, \xi_0).
\end{align}
\end{lemma}

Lemma \ref{lem:LvdL} restates Theorem 1 in  \citet{LuedtkeLaan}.  If  \eqref{ass:unique} holds, the sign of $m(1,X) - m(0,X)$ can be treated as known when calculating the efficient score. 

\subsection{Proofs for Section \ref{sec:res}}

\begin{proof}[Proof of Theorem \ref{thm:closedform}]

The first step introduces error terms. The second and third steps  bound the first-order and the second-order terms.

\textbf{ Step 1. } Define the true and estimated minimizers
\begin{align*}
t_0(X) := \arg \min_{t \in \mathcal{T}} \phi (t,\nu_0(X)), \quad t(X) :=\arg \min_{t \in \mathcal{T}}  \phi(t,\nu(X)).
\end{align*}
Define the errors  due to mistakes in estimated minimizers
\begin{align*}
 \tau_0(X) :&= \phi (t(X), \nu_0(X))- \phi (t_0(X), \nu_0(X)), \\
   \tau(X) :&= \phi(t(X),\nu(X)) - \phi (t_0(X), \nu(X)).
\end{align*}
Finally, decompose the estimation error of the proposed signal
\begin{align}
 \rho (W,t,\xi) - \rho(W,t_0,\xi_0) = (\rho (W,t,\xi) - \rho(W,t,\xi_0)) + (\rho(W,t,\xi_0) - \rho(W,t_0,\xi_0)) \nonumber  \\
 &=: S_1 + S_2. \label{eq:s1s2}
\end{align}
 
\textbf{ Step 2. }  By definition of $t_0(X)$, the error term $\tau_0(X)$ must be non-negative with probability one
$$
\Pr (\tau_0(X) \geq 0) = 1.
$$
Furthermore, since the true minimizer is assumed unique, $\tau_0(x)=0$ if and only if $t(x) = t_0(x)$ for any $x \in \mathcal{X}$. As a result, the probability of $\tau_0(X)$ being in an open interval  $(0,t)$ is   upper bounded by Assumption \ref{ass:ma}
\begin{align*}
\Pr ( 0 <  \tau_0(X) < t) \leq \Pr \left( 0 \leq \min_{t \in \mathcal{T} \setminus \arg \min_{t \in \mathcal{T}} \phi (t, \nu_0(X)) } \phi (t, \nu_0(X))- \min_{t \in \mathcal{T}} \phi (t, \nu_0(X)) \leq t \right) \leq  \bar{B} t.
\end{align*}

\textbf{ Step 3. }    Next, by definition of $t(X)$ in Step 1,  $\tau(X) \leq 0$ with probability one. Then 
\begin{align}
\label{eq:etau}
\tau(X) \leq 0  <  \tau_0(X) \Rightarrow 0 < \tau_0(X) \leq  \tau_0(X) -\tau(X).
\end{align}
For any element $\nu$ in the shrinking neighborhood $\mathcal{T}^{\nu}_N$ and any $x \in \mathcal{X}$,  the upper bound applies
\begin{align}
|\tau(x) - \tau_0(x) | &\leq | \phi(t(x),\nu(x)) - \phi (t(x), \nu_0(x)) | \nonumber \\
&+ | \phi(t_0(x),\nu(x)) - \phi (t_0(x), \nu_0(x)) | \leq^{(i)} 2 B_{\phi} \| \nu(x) - \nu_0 (x) \| \leq^{(ii)} 2 B_{\phi} \nu^{\infty}_N, \label{eq:uniformrate}
\end{align}
where (i) follows from Assumption \ref{ass:reg}(2) and (ii) from Assumption \ref{ass:rate}. Define  the misclassification event
\begin{align*}
 \mathcal{E}_{\tau}  := \{ 0 < \tau_0(X) \leq 2 B_{\phi} \nu^{\infty}_N \}.
\end{align*}
Invoking Assumption \ref{ass:smallbias} gives an upper bound for the first term 
\begin{align*}
\sqrt{N} | \E [ S_1 ] | \leq \sqrt{N} \sup_{t \in \mathcal{T}} \sup_{\xi \in \Xi_N} | \E [ \rho (W,t,\xi) - \rho(W,t,\xi_0)] | \leq B_N = o(1).
\end{align*}
To bound the second term, I show that 
\begin{align*}
| \E [ S_2 ] |   = O( (\nu^{\infty}_N)^2).
\end{align*}
By LIE,  the second term can be upper bounded 
\begin{align*}
0 \leq \E [ S_2 ]  =  \E [ \tau_0(X)  ] &\leq  \E [ (\tau_0(X)- \tau(X))  1\{  0 < \tau_0(X) \leq  \tau_0(X) -\tau(X) \} ].
\end{align*}
For any $\nu \in \mathcal{T}^{\nu}_N$, 
$$
\{ X: 0 < \tau_0(X) \leq  \tau_0(X) -\tau(X) \} \Rightarrow \{ X \in  \mathcal{E}_{\tau} \},
$$
and 
\begin{align}
\label{eq:maineq0}
\E [ (\tau_0(X)- \tau(X))  1\{  0 < \tau_0(X) \leq  \tau_0(X) -\tau(X) \}]  \leq \E [ (\tau_0(X)- \tau(X))  1\{ \mathcal{E}_{\tau} \}] \leq  2 B_{\phi} \nu^{\infty}_N \Pr ( \mathcal{E}_{\tau}  ),
\end{align}
which, in turn, is upper bounded as 
\begin{align}
\label{eq:maineq}
\Pr ( \mathcal{E}_{\tau}  ) =   \Pr (   0 < \tau_0(X)  \leq  2 B_{\phi} \nu^{\infty}_N )  \leq  2 B_{\phi} \bar{B} \nu^{\infty}_N.
\end{align}
Combining the displays \eqref{eq:maineq0} and \eqref{eq:maineq} gives
\begin{align*}
| \E [ S_2 ] |   &\leq  2 B_{\phi} \nu^{\infty}_N \Pr ( \mathcal{E}_{\tau}  )  \leq  4 B^2_{\phi} \bar{B} (\nu^{\infty}_N)^2.
\end{align*}
Adding the bounds  on $\E [ S_1 ]$ and $\E [ S_2 ]$ gives
\begin{equation}
 \sup_{\xi \in \Xi_N} \sqrt{N} |\E [ S_1 + S_2]|= O(  B_N +   \sqrt{N}  (\nu^{\infty}_N)^2) = o(1). 
\end{equation}
Finally, the second-order terms can be bounded as 
\begin{align*}
 \sup_{\xi \in \Xi_N} \E [ ( \rho(W,t,\xi) - \rho(W, t_0,\xi_0))^2] &\leq 2  \sup_{\xi \in \Xi_N}  (\E [ S^2_1  ] + \E [ S^2_2 ])  = O(\Lambda_N + \nu^{\infty}_N) = o(1),
\end{align*}
where the bound on $\E [ S^2_1  ]$ follows from Assumption \ref{ass:smallbias}. To see the bound on $\E [ S^2_2]$  note 
$$
 \sup_{\xi \in \Xi_N} \E [ S^2_2 ] \leq  \sup_{x \in \mathcal{X}} \sup_{t \in \mathcal{T}} 2  \E [ \rho^2(W, t, \xi_0) \mid X=x]  \sup_{\nu \in \mathcal{T}^{\nu}_N}\Pr ( \mathcal{E}_{\tau}  )  = O ( \nu^{\infty}_N) 
$$
Invoking Lemma A.3 from \citep{SemCher} gives the statement of the Theorem.
\end{proof}

\begin{proof}[Proof of Theorem \ref{thm:consistency}]
Decompose the estimation error of the proposed moment function
\begin{align*}
 \rho^2 (W, t, \xi) -  \rho^2 (W, t_0, \xi_0)&= ( \rho^2 (W, t, \xi) - \rho^2 (W, t, \xi_0)) + (\rho^2 (W, t, \xi_0) -  \rho^2 (W, t_0, \xi_0)) \\
 &= T_1 + T_2.
\end{align*}
For any a.s. $B$-bounded random variables $P$ and $Q$, note that 
\begin{align}
\label{eq:mainineq2}
\E [ |P^2 - Q^2| ] \leq 2 \E [|P - Q| \max (|P|, |Q|)] \leq 2 B \| P-Q \|_{P,2}.
\end{align}
Plugging $P:= \rho (W, t, \xi)$ and $Q:=  \rho(W, t, \xi_0)$ and $B=B_{\rho}$ gives an upper bound on the first term
$$
\sup_{ \xi \in \Xi_N} | \E [ T_1] | \leq^{i} 2B_{\rho}  \sup_{t \in \mathcal{T}} \sup_{ \xi \in \Xi_N}  (\E [( \rho (W, t, \xi) -  \rho(W, t, \xi_0))^2])^{1/2}  \leq^{ii} 2 B_{\rho} \Lambda_N^{1/2}
$$
where (i) follows from \eqref{eq:mainineq2} as well as an upper bound in Assumption \ref{ass:reg} and (ii) from Assumption \ref{ass:smallbias}. To bound the second term, 
$$
\sup_{ \xi \in \Xi_N}  | \E [T_2]  | \leq^{i} 2B_{\rho}^2 \sup_{ \nu \in \mathcal{T}^{\nu}_N}  \Pr ( t(X) \neq t_0(X)) \leq 2 B_{\rho}^2 \sup_{ \nu \in \mathcal{T}^{\nu}_N}  \Pr ( \mathcal{E}_{\tau}  )  \leq^{ii} 4\bar{B} B_{\phi} B_{\rho}^2 \nu^{\infty}_N.
$$
where (i) follows from an upper bound in Assumption \ref{ass:reg} and (ii) from \eqref{eq:maineq}. Combining the terms gives a bound on bias $|\E [ T_1 + T_2 ] | = O(\Lambda_N^{1/2} + \nu^{\infty}_N) = o(1)$. 
The consistency follows from a standard LLN invoked for each partition $k \in [K]$, where the summands are i.i.d. conditional on the data in the hold-out partitions $(W_i)_{i \in J_k^c}$. 
\end{proof}

\begin{proof}[Proof of Theorem \ref{thm:boot}]
Theorem \ref{thm:boot} is a special case of Corollary 3.2 in \citep{SemJoE} which itself follows from Theorem 3.2 therein. Assumption 3.1 holds with $\Sigma=1 = A(W, \eta_0) \text{ a.s.} $. Since the problem is scalar (i.e. $d=1$), Assumption 3.2 does not apply. Next, it suffices to verify Assumption 3.3 with $\mu_N $ and $r_N'$ and  $A_N=\delta_N= r_N''=0$. From the proof of Theorem \ref{thm:closedform}, 
\begin{align*}
\mu_N = O (B_N +     N^{1/2} (\nu^{\infty}_N)^2) = o (1), \quad r_N' = O(\Lambda_N + \nu^{\infty}_N) = o(1).
\end{align*}
Assumptions 3.4 and 3.5 do not apply. The result follows.
\end{proof}

\begin{proof}[Proof of Lemma \ref{ex:1}]
The proof is established in two steps. In Step 1, I assume that $g_t(\cdot)=0$ for all $t \in \mathcal{T}$ to establish the result by verifying Assumption \ref{ass:ma}.  In Step 2, I drop this assumption.  

\textbf{ Step 1 }. The following inequality holds
\begin{align*}
&    \sup_{ (t,j) \in \mathcal{T}: t \neq j}  \Pr ( 0 < | X^{\prime} (\gamma_t - \gamma_j) | < t)  \leq \sup_{ (t,j) \in \mathcal{T}: t \neq j}  \Pr ( 0 < | X^{\prime} (\gamma_t - \gamma_j) | < (t/\| \gamma_t - \gamma_j \|) \| \gamma_t - \gamma_j \|)  \\
&\leq^{i} \sup_{ (t,j) \in \mathcal{T}: t \neq j} (B_X/ \| \gamma_t - \gamma_j \|) t,
\end{align*}
where (i) follows from \eqref{eq:macov}. Since $\gamma_t \neq \gamma_j$ for all $k \neq j$, Assumption \ref{ass:ma} holds with a finite constant $\bar{B} = \sup_{ (t,j) \in \mathcal{T}: t \neq j} (B_X/ \| \gamma_t - \gamma_j \|)$. 

\textbf{ Step 2 }. For each $t, j$, the conditional distribution of  $m(t,\widetilde{X} )-m(j,\widetilde{X} ) \mid  \bar{X}$ is the same as the unconditional distribution of $X^{\prime} (\gamma_t - \gamma_j)$ up to a finite shift given by 
$g_t(\bar{X}) - g_j (\bar{X})$. Thus, since  $X^{\prime} (\gamma_t - \gamma_j)$ has a $\bar{B}$-bounded unconditional density,  the random variable $X^{\prime} (\gamma_t - \gamma_j) + g_t(\bar{X}) - g_j (\bar{X})$ has a bounded conditional density that is $\bar{B}$-bounded  a.s. in $\bar{X}$ and thus $\bar{B}$-bounded unconditionally.

\end{proof}

\begin{proof}[Proof of Lemma \ref{ex:3}]
The following inequality holds
\begin{align*}
 \sup_{  (t,j) \in \mathcal{T}: t \neq j}  \Pr ( 0 < | m(t,X) - m(j,X) | < t) &=   \sup_{  (t,j) \in \mathcal{T}: t \neq j}  \Pr ( 0 < | F(X^{\prime} \gamma_t  ) - F(X^{\prime} \gamma_j)    | < t) \\
&\leq   \sup_{  (t,j) \in \mathcal{T}: t \neq j}  \Pr ( 0 < \underline{f}  | X^{\prime} (\gamma_t-\gamma_j)   | < t  ) \\
&\leq^{i} \sup_{ (t,j) \in \mathcal{T}: t \neq j} (B_X/  \underline{f} \| \gamma_t - \gamma_j \|)  t.
\end{align*}
where (i) follows from \eqref{eq:macov} and the conditions of Lemma \ref{ex:3}. 
\end{proof}

\subsection{Proofs for Section \ref{sec:appl}}

\begin{proof}[Proof of Proposition \ref{prop:roy2}]
Invoking the proof of Proposition 1 in \citet*{MourifieHenry} conditional on $X=x$ for each value of $x \in \mathcal{X}$ gives 
 \begin{align}
\Pr (Y(1)=1, Y(0) =0 \mid X=x) &\leq \min_{z \in \mathcal{Z}} \Pr (Y=1, D=1\mid Z=z, X=x) \label{eq:roy2app} \\
\Pr (Y(1)=0, Y(0) =1 \mid X=x) &\leq  \min_{z \in \mathcal{Z}}  \Pr (Y=1, D=0 \mid Z=z, X=x) \label{eq:roy3app}   
\end{align}
Averaging over marginal covariate distribution and invoking LIE gives
 \begin{align}
\Pr (Y(1)=1, Y(0) =0) \leq \E  [\min_{z \in \mathcal{Z}} \Pr (Y=1, D=1\mid Z=z, X)] \label{eq:roy22app} \\
\Pr (Y(1)=0, Y(0) =1) \leq \E [\min_{z \in \mathcal{Z}}  \Pr (Y=1, D=0 \mid Z=z, X)]\label{eq:roy33app} 
\end{align}
Furthermore, under unconditional independence \eqref{eq:uncondindep}, the propensity score is constant and 
\begin{align*}
f_{X \mid Z=z} (x \mid Z=z) = f_X (x),
\end{align*}
which implies
\begin{align*}
\E_X [ \Pr (Y=1, D=1\mid Z=z, X) \mid Z=z ] =  \Pr (Y=1, D=1\mid Z=z).
\end{align*}
As a result, invoking Jensen's inequality gives
 \begin{align*}
\E [\min_{z \in \mathcal{Z}} \Pr (Y=1, D=1\mid Z=z, X)]  \leq \min_{z \in \mathcal{Z}} \Pr (Y=1, D=1\mid Z=z) \\
\E [\min_{z \in \mathcal{Z}} \Pr (Y=1, D=0\mid Z=z, X)]  \leq \min_{z \in \mathcal{Z}} \Pr (Y=1, D=0\mid Z=z).
\end{align*}
\end{proof}

\begin{proof}[Proof of Corollary  \ref{cor:roy}]
Assumption \ref{ass:smallbias} follows from Assumption \ref{ass:fsrate}, as shown in Lemma 4.11 in \cite{SemCher}. Assumption \ref{ass:rate} is directly assumed.  Since $D$ and $Y$ are binary random variables, Assumption \ref{ass:reg} (1) holds with an upper bound of
$$
\sup_{\nu \in \mathcal{T}^{\nu}_N} \sup_{w \in \mathcal{W}} \sup_{z \in \mathcal{Z}} | \rho (w, z, \nu) | \leq 1 + 2/\kappa  \text{ a.s.},
$$
which implies Assumption \ref{ass:reg}(1) is satisfied. Assumption \ref{ass:reg} (2) holds with $B_{\phi} = 1$. Assumption \ref{ass:ma} is assumed directly. \end{proof}

\begin{proof}[Proof of Proposition \ref{prop:lee}]
Proposition \ref{prop:lee} is a special case of Proposition \ref{prop:leed} with $\mathrm{T} = \{ 0, 1 \}$. For a binary outcome $Y$ taking values $1$ and $0$, 
\begin{equation}
\E [ \max(Y-\beta, 0) \mid D=1, S=1, X=x] = p_Y (x) \max (1-\beta, 0) + \max (-\beta, 0) (1-p_Y(x)). \label{eq:maxbeta} 
\end{equation}
Then,
\begin{align*}
&\inf_{\beta \in \mathrm{R} } (\beta s_0(0,x) +   s_0(1,x )  \E [ \max(Y-\beta, 0) \mid X=x]) \\
  &= s_0(0,x) + \inf_{\beta \in \mathrm{R} } \{  (\beta-1) s_0(0,x) +   s_0(1,x )  (p_Y (x) \max (1-\beta, 0) + \max (-\beta, 0) (1-p_Y(x))) \} \\
%&= s_0(0,x) +  \inf_{\beta \in [0, 1] }  \{   (1-\beta) (s_0(1,x )  p_Y(x) - s_0(0,x)) +  s_0(1,x )  \max (-\beta, 0) (1-p_Y(x)))  \} \\
&= s_0(0,x) + \min (0, s_0(1,x )  p_Y(x)-s_0(0,x)).
\end{align*}
Consider the case when $\beta \leq 0$. Then, 
\begin{align*}
&\inf_{\beta \leq 0}  (1-\beta) (s_0(1,x )  p_Y(x) - s_0(0,x)) +  s_0(1,x )  \max (-\beta, 0) (1-p_Y(x)))  \\
&= \inf_{\beta \leq 0}  \beta (s_0(0,x) - s_0(1,x)) + s_0(1,x )  p_Y(x) - s_0(0,x)  \\
&\leq  s_0(1,x )  p_Y(x) - s_0(0,x)
\end{align*}
since $s_0(0,x) - s_0(1,x) \leq 0 \quad \forall x \in \mathcal{X}$ by Assumption \ref{ass:identification:treat}. For $\beta \geq 0$, 
\begin{align*}
&\inf_{\beta \geq 0}  (1-\beta) (s_0(1,x )  p_Y(x) - s_0(0,x)) \geq \min (0, s_0(1,x )  p_Y(x)-s_0(0,x)).
\end{align*}
Combining the results gives an expression that coincides with the numerator $N_U$ in \eqref{eq:nubinary}. A similar argument for the numerator $N_L$ applies.
\end{proof}

\begin{proof}[Proof of Proposition \ref{prop:leed}]
Step 1 establishes a version of \eqref{eq:nl} and \eqref{eq:nu}  with the index set  $\mathcal{T} = \mathrm{R}$. Step 2 shows that the index set $\mathcal{T}$ can be reduced to the support set $\mathrm{T}$.

\textbf{Step 1}. As shown in \citep{HorowitzManski}, the distribution of treated outcomes  is a mixture of always-takers' and compliers' outcomes with mixing proportions $p_0(x)$ and $1-p_0(x)$ where $p_0(x)$ is given in \eqref{eq:condtrim}.  Therefore, the conditional always-takers' ATE $\beta_1(x)= \E[Y(1)  \mid S(1)=1, S(0)=1, X=x]$ is bounded from above and below as
\begin{align}
CVAR^{L}_{Y \mid D=1, S=1, X=x} (p_0(x)) \leq \beta_1(x) \leq CVAR^{U}_{Y \mid D=1, S=1, X=x} (p_0(x)).
\end{align}
The always-takers' covariate density is obtained by Bayes rule
$$
f_X (x \mid S(1) = S(0)=1) = \dfrac{s_0(0,x) f_X(x)}{\E [ s_0(0,X)]}.
$$
Multiplying by $f_X (x \mid S(1) = S(0)=1)$ and averaging gives 
$$
\dfrac{\E [ CVAR^{L}_{Y \mid D=1, S=1, X} (p_0(X)) s_0(0,X) ]  }{\E [s_0(0,X)]} \leq \beta_1 \leq \dfrac{\E [ CVAR^{U}_{Y \mid D=1, S=1, X} (p_0(X)) s_0(0,X) ]  }{\E [s_0(0,X)]},
$$
Replacing $ CVAR^{L}_{Y \mid D=1, S=1, X=x} (p_0(x)) $ and  $ CVAR^{U}_{Y \mid D=1, S=1, X=x} (p_0(x)) $ with their dual analogs  \eqref{eq:cvarl} and \eqref{eq:cvaru}  gives an expression for  $N_L$ and $N_U$ that are identical to those  in \eqref{eq:nl} and \eqref{eq:nu} except $\mathrm{T} = \mathrm{R}$.

\textbf{Step 2}. Consider \eqref{eq:nu}. Let $\{ y_k \}_{k=1}^{| \mathrm{T}|}$ be the ordered set of support points where $y_{\max} = y_{| \mathrm{T}|}$.  Note that 
\begin{align*}
\E [ \min (Y -\beta, 0) \mid D=1, S=1, X=x] &= \sum_{y \leq \beta, y \in \mathrm{T}  } (y-\beta) \pi_{y0}(x), \\
\E [ \max (Y -\beta, 0) \mid D=1, S=1, X=x] &= \sum_{y \geq \beta, y \in \mathrm{T}  } (y-\beta) \pi_{y0}(x).
\end{align*}
Note that $\mathcal{T}$ can be taken as   $[ y_1, y_{\max} ]$ since $[y_{\max}, \infty)$ and $(-\infty, y_{\min}]$ are dominated by an argument similar to the proof of Proposition \ref{prop:leed}.
  By Assumption \ref{ass:identification:treat}, $(\beta'-\beta) (s_0(0,x) -s_0(1,x))<0$ for any $\beta<\beta' $ where $\beta, \beta' \in (y_k, y_{k+1}]$. Thus, for each $\beta \in (y_{k}, y_{k+1}]$ the bound expression is minimized at the right endpoint of the interval, namely $y_{k+1}$.  Thus, it suffices to consider support points only as well as $(-\infty, y_1]$ and $(y_{k+1}, +\infty)$. For $\beta \in (y_{k+1}, +\infty)$, those are dominated by $y_{k+1}$. Likewise, for the points in $(-\infty, y_1)$ are dominated by $\beta=0$ since $\beta (s_0(0,x) - s_0(1,x)) \leq 0$.
\end{proof}

\begin{proof}[Proof of Corollary  \ref{prop:leed2}]
The numerator upper bound $N_U$ in \eqref{eq:nu} is a special case of  \eqref{eq:mainpsi} with
$$
\nu_0(x) =  (s_0(0,x), s_0(1,x), {\pi_{\beta0}(x)}_{\beta  \in \mathrm{T} \setminus \{ y_{\max} \} })
$$
and 
$$
\phi (\beta, \boldsymbol{v}) = \beta \boldsymbol{v}_1 + \boldsymbol{v}_2 \sum_{y \in \mathrm{T} \setminus \{ y_{\max} \} } 1\{ y \geq \beta \} (y-\beta) \boldsymbol{v}_{y} 
$$
Step 1 verifies Assumptions \ref{ass:smallbias} and \ref{ass:rate}. Step 2 verifies Assumption \ref{ass:reg}. Steps 3 verifies Assumption \ref{ass:ma}.

\textbf{ Step 1. }  Assumption \ref{ass:smallbias} holds by construction since the moment functions $\{\rho_U(W, \beta)\}_{\{\beta \in \mathrm{T}\}}$ in \eqref{eq:guhelp_beta} do not involve any nuisance parameters. Assumption \ref{ass:rate} is directly assumed in Assumption \ref{ass:fsrate2}.

\textbf{ Step 2. } Consider Assumption \ref{ass:reg} (1) holds with $B_{\rho} =2 \max_{y \in \mathrm{T}} |y|/\kappa$.  Next, note that 
\begin{align*}
| \dfrac{\partial \phi (x, \boldsymbol{v}) }{\partial \boldsymbol{v}_1} | &= |\beta | \leq \max_{y \in \mathrm{T}} |y|   \\
|  \dfrac{\partial \phi (x, \boldsymbol{v}) }{\partial \boldsymbol{v}_2} |  &\leq 2 \max_{y \in \mathrm{T}} |y| |  \mathrm{T} |   |\boldsymbol{v}_y| \leq 2 \max_{y \in \mathrm{T}} |y| |  \mathrm{T} | (1-\kappa) \\
  |\dfrac{\partial \phi (x, \boldsymbol{v})}{\partial \boldsymbol{v}_j}| &\leq 2 \max_{y \in \mathrm{T}} |y| |  \mathrm{T} |    |\boldsymbol{v}_2| \leq 2 \max_{y \in \mathrm{T}} |y| |  \mathrm{T} | (1-\kappa),
\end{align*}
which implies Assumption \ref{ass:reg} (2) holds with some $B_{\phi}$ large enough.

\textbf{ Step 3. } Let $\beta$ and $\beta'$ such that $\beta < \beta'$ be two distinct values in $\mathrm{T}$.  Then, 
\begin{align*}
&\phi (\beta', \nu_0(X)) - \phi (\beta, \nu_0(X)) \\
= (\beta' - \beta) s_0(0,X) &+ s_0(1,X) \sum_{y \in \mathrm{T}  } 1\{ y \geq \beta'  \}  ( y - \beta' ) \pi_{y0}(X)   \\
&- s_0(1,X) \sum_{y \in \mathrm{T}  } 1\{ y \geq \beta'  \}  ( y - \beta ) \pi_{y0}(X)  \\
&- s_0(1,X)  \sum_{y \in \mathrm{T} } 1\{  \beta \leq y <  \beta'  \}  ( y - \beta) \pi_{y0}(X)   \\
&= (\beta' - \beta) (s_0(0,X) - s_0(1,X) \pi_{y0}(X) \sum_{y \in \mathrm{T}  } 1\{ y \geq \beta'  \}  ( y - \beta )) \\
& - s_0(1,X)  \sum_{y \in \mathrm{T} } 1\{  \beta \leq y <  \beta'  \}  ( y - \beta) \pi_{y0}(X) 
\end{align*}
is a smooth mapping of the vector $(s_0(0,X), s_0(1,X), \{ \pi_{y0}(X) \}_{ y \in [\beta, \beta')})$ each of component is supported on $(\kappa, 1- \kappa)$. For example, in the binary case $\mathrm{T} = \{0, 1 \}$  we have $\beta'=1$ and $\beta=0$ and 
\begin{align*}
\phi (1, \nu_0(X)) - \phi (0, \nu_0(X)) = s_0(0,X)  - s_0(1,X)  p_Y(X).
\end{align*}
Assumption \ref{ass:fsrate2} reduces to assuming that $(s_0(0,X), s_0(1,X), p_Y(X))$ has a bounded joint density.  Given a vector $(X,Y,Z)$ with a bounded joint density supported on $(\kappa, 1-\kappa)^3$, the map $X - YZ$ has partial derivatives bounded away from zero since $\kappa>0$ and from above. Thus, $s_0(0,X)  - s_0(1,X)  p_Y(X)$ also has a bounded density.

\end{proof}

 \bibliographystyle{apalike}
\bibliography{/Users/virasemenova/Desktop/my_new_bibtex}

\end{document}